\theoremstyle{plain}
\newcommand{\defparproblem}[4]{
  \vspace{3mm}
\noindent\fbox{
  \begin{minipage}{.95\textwidth}
  \begin{tabular*}{\textwidth}{@{\extracolsep{\fill}}lr} \textsc{#1}\\ \end{tabular*}
  {\bf{Input:}} #2  \\
  {\bf{Parameter:}} #3 \\
  {\bf{Question:}} #4
  \end{minipage}
  }
  \vspace{2mm}
}
\newcommand{\defproblem}[3]{
  \vspace{3mm}
\noindent\fbox{
  \begin{minipage}{.95\textwidth}
  \begin{tabular*}{\textwidth}{@{\extracolsep{\fill}}lr} #1  \\ \end{tabular*}
  {\bf{Input:}} #2  \\
  {\bf{Question:}} #3
  \end{minipage}
  }
  \vspace{2mm}
  }
\def\mypara#1{\sbox0{\parbox{\linewidth}{%
  \parindent0pt
 #1\par\xdef\myparasize{\the\prevgraf}}}%
\ifnum\myparasize=1
{\parindent0pt #1\par}%
\else
#1
\fi}
\newcommand{\AAA}{{\mathcal A}}
\newcommand{\OO}{{\mathcal O}}
\newcommand{\FF}{{\mathcal F}}
\newcommand{\ld}{{\sc Longest $(s,t)$-Detour}}
\newcommand{\sv}[1]{}
\newtheorem{corollary}{Corollary}
\newtheorem{reduction rule}{Reduction Rule}
\newtheorem{branching rule}{Branching Rule}
\newtheorem{definition}{Definition}
\newtheorem{lemma}{Lemma}
\newtheorem{observation}{Observation}
\newtheorem{theorem}{Theorem}
\newtheorem{claim}{Claim}
\title{Long Directed Detours: Reduction to $2$-Disjoint Paths\footnote{It has recently come to our attention that a similar result is given in~\cite{hatzel2023simpler}. Our work was conducted independently of this work.} \footnote{This work was supported by the European Research Council (ERC) grant titled PARAPATH.}}
\date{}
\author[1]{Ashwin Jacob}
\author[1]{Micha{\l} W{\l}odarczyk}
\author[1]{Meirav Zehavi}
\affil[1]{Ben Gurion University of the Negev, Beersheba, Israel
  \texttt{\{ashwinj|meiravze\}@bgu.ac.il}\\
  \texttt{michal.wloda@gmail.com}}
\begin{document}

\maketitle

\begin{abstract}
In the \textsc{Longest $(s,t)$-Detour} problem, we look for an $(s,t)$-path that is at least $k$ vertices longer than a shortest one. We study the parameterized complexity of \textsc{Longest $(s,t)$-Detour} when parameterized by $k$: this falls into the research paradigm of `parameterization above guarantee'. Whereas the problem is known to be fixed-parameter tractable (FPT) on undirected graphs, the status of \textsc{Longest $(s,t)$-Detour} on directed graphs remains highly unclear: it is not even known to be solvable in polynomial time for $k=1$. Recently, Fomin et al. made progress in this direction by showing that the problem is FPT on every class of directed graphs where the \textsc{3-Disjoint Paths} problem is solvable in polynomial time. We improve upon their result by weakening this assumption: we show that only a polynomial-time algorithm for \textsc{2-Disjoint Paths} is required. What is more, our approach yields an arguably simpler proof.
\end{abstract}


\section{Introduction}
In the {\sc Longest Path} problem, given a (directed or undirected) graph $G$ and a non-negative integer $k$, the objective is to determine whether $G$ contains a (simple) path of length at least $k$. For over more than three decades, this problem has been studied extensively in the field of parameterized complexity, and has led to the development of various algorithmic techniques such as color-coding, algebraic methods and representative sets~\cite{alon1995color,koutis2008faster,williams2009finding,koutis2015algebraic,fomin2016efficient}.

A recent trend in parameterized complexity is to consider above guarantee versions of  {\sc Longest Path} and its variants where the parameter is the difference between $k$ and some guarantee. Hence, the length of the sought path could be potentially large while the parameter is small.
One such example is the {\ld} problem, where we are also given two vertices $s$ and $t$ of $G$, and the objective is to determine whether $G$ contains a path from $s$ to $t$ of length at least $dist_G(s,t) +k$, where $dist_G(s,t)$ is the length of a shortest path from $s$ to $t$ in $G$ (assuming that one exists) and the parameter is $k$. 

Let $\mathcal{C}$ be the class of directed graphs where {\sc $2$-Disjoint Paths} problem is polynomial-time solvable.
We show that the  {\ld} problem is FPT in directed graphs in $\mathcal{C}$.



The {\ld} problem was first studied by Bezakova et. al.~\cite{bezakova2017finding}, where they showed that the problem is FPT in undirected graphs. They also proved that the {\sc Exact $(s,t)$-Detour} problem, where we check for an $(s,t)$-path of length exactly $dist_G(s,t)+k$, is solvable in $6.745^kn^{\OO(1)}$ time for directed and undirected graphs. Later, Fomin et. al.~\cite{fomin2022detours} showed that the {\ld} problem is solvable in $45.5^kn^{\OO(1)}$ time for directed graphs in graph classes where the {\sc $3$-Disjoint Paths} problem is polynomial-time solvable. Thus, we improve the requirement of solvability of {\sc $3$-Disjoint Paths} to the solvability of {\sc $2$-Disjoint Paths}. We note that for directed graphs in general, it is not even known whether the problem is in {P} or {NP}-hard for $k=1$~\cite{bezakova2017finding,fomin2022detours,gutin2022survey}.

\medskip \noindent{\bf Related Work.} The paradigm of ``above guarantee" parameterizations of problems was introduced by Mahajan and Raman~\cite{mahajan1999parameterizing}. Since then, such parameterizations were studied for a wide variety of problems, such as {\sc Vertex Cover}~\cite{lokshtanov2014faster,gutin2011vertex}, {\sc Max Sat}~\cite{mahajan1999parameterizing}, {\sc Max Cut}~\cite{alon2011solving,mahajan1999parameterizing} and {\sc Multiway Cut}~\cite{cygan2013multiway}, to name a few.

Other ``above guarantee" versions of the {\sc Longest Path} and {\sc Longest Cycle} problems were studied in the literature for {\em undirected} graphs. Fomin et al.~\cite{fomin2020going} gave an algorithm to find paths (cycles) of length $d+k$ in a connected ($2$-connected) $d$-degenerate graph in $2^{\OO(k)}n^{\OO(1)}$ time. Jansen et al.~\cite{jansen2019hamiltonicity} showed that the {\sc Hamiltonian Cycle} problem can be solved in $2^{\OO(k)}n^{\OO(1)}$ time, if at least $n-k$ vertices have degree at least $n/2$, or if all vertices have degree at least $n/2-k$ (a relaxation of the famous Dirac's Theorem~\cite{dirac1952some}).  For the first parameterization, they did so by giving an $\OO(k)$ kernel. Recently, Fomin et al.~\cite{fomin2022algorithmic} considered a more general relaxation of Dirac's Theorem. They showed that given a $2$-connected graph $G$ and a subset $B \subseteq V(G)$, finding a cycle of length at least $\min\{2\delta(G-B), n-|B|\}+k$ for minimum degree function $\delta$ can be solved in $2^{\OO(k+|B|)} n^{\OO(1)}$ time. Further, Fomin et al.~\cite{fomin2020parameterization} gave an algorithm to find paths and cycles of length at least $g \cdot k$ where $g$ is the girth of the graph in  $2^{\OO(k^2)}n$ time.

\section{Preliminaries}

\medskip \noindent{\bf General Notation.}
Let $\mathbb{N}_0 = \mathbb{N} \cup \{0\}$, where $\mathbb{N}$ is the set of natural numbers. Given $r \in \mathbb{N}_0$, let $[r] = \{1,\ldots,r\}$, and given $r_1, r_2 \in \mathbb{N}_0$,  let $[r_1,r_2] = \{r_1,r_1+1, \ldots,r_2\}$.
Given a finite set $A$ and an integer $t \in \mathbb{N}_0$,  let $2^{A}$ denote the family of all subsets of $A$, and let ${{A}\choose{t}}$ and ${{A}\choose{\leq t}}$ denote the family of all subsets of $A$ size exactly $t$ and at most $t$, respectively. 

\medskip \noindent{\bf Graphs.}
We use standard graph theoretic terminology from Diestel's book~\cite{diestel-book}. For a directed or undirected graph $G=(V,E)$, let $V(G)$ denote the set of vertices of $G$, and $E(G)$ the set of edges of $G$. When $G$ is clear from context, let $n = |V(G)|$ and $m = |E(G)|$.  For a set $X \subseteq G$, let $G[X]$ denote the subgraph of $G$ induced on the vertex set $X$, and $G - X$ denote the subgraph of $G$ induced on the vertex set $V(G) \setminus X$. For an edge $e = (u, v)$ in a directed graph, we say that  $v$ is an {\em outneighbor} of $u$ and $u$ is an {\em inneighbor} of $v$. 

 
A {\em walk} (in a given graph $G$) is a sequence of vertices $v_0, \ldots , v_k$ such that for every $i \in \{0, \ldots , k - 1\}$ we have $(v_i, v_{i+1}) \in E(G)$. The walk is {\em closed} if $v_0 = v_k$.
A {\em path} is a walk where no vertex appears more than once. A {\em cycle} is a closed walk where no vertex appears more than once, apart from $v_0$ and $v_k$ (which are the same vertex).
The {\em length} of a path (or a cycle) $P$  is the number of edges in $P$, and it is denoted by $|P|$. The {\em girth} of a graph $G$ is the shortest length of a cycle in $G$. If the first and last vertices of $P$ are $s$ and $t$, respectively, we call it an {\em $(s,t)$-path}. If an $(s,t)$-path has the shortest length among all the $(s,t)$-paths in the graph $G$, we call the path a {\em shortest} $(s,t)$-path in $G$ (or just shortest $(s,t)$-path if $G$ is clear from context). For a path $P$, let us denote the subpath between any two vertices $u_1$ and $u_2$ of $P$ as $P_{u_1, u_2}$. We say that the walk formed by two paths $P_1$ and $P_2$ where the last vertex of $P_1$  is equal to the first vertex of $P_2$ is the {\em concatenation} of  $P_1$ and $P_2$, denoted as $P_1 \circ P_2$. A vertex $t$ is {\em reachable} from a vertex $s$ in $G$ if there exists a path in $G$ that starts with $s$ and ends with $t$. The {\em distance} between two vertices $s$ and $t$ in $G$, denoted by $dist_G(s,t)$, is the number of edges in a shortest $(s,t)$-path in $G$ (defined as $\infty$ if no such path exists). When $G$ is clear from context, we drop the subscript $G$ from $dist_G(s,t)$.

\medskip \noindent{\bf Parameterized Complexity.}
A parameterized problem $L$ is a subset of 
$\Sigma^* \times \mathbb{N}_0$ for some finite alphabet $\Sigma$.
Thus, an instance of a parameterized problem is denoted by $(x, k)$ where $x \in \Sigma^*, k \in \mathbb{N}_0$. We denote by $|x|$ the length of $x$.

\begin{definition}[Fixed-Parameter Tractability]
\label{defn:fpt}
A parameterized problem $L$ is {\em fixed-parameter tractable} (FPT) if there exists an algorithm $\AAA$, a computable function $f: \mathbb{N} \rightarrow \mathbb{N}$ and a constant $c$ independent of $f$ and $k$, such that given input $(x, k)$, $\AAA$ runs in time $\OO(f(k)|x|^{c})$ and correctly decides whether $(x, k) \in L$.
\end{definition}


\medskip \noindent{\bf Pseudorandom Object.}
We will use the following notion.
\begin{definition}[$(n,p,q)$-Lopsided Universal Family]
Let $n, p, q \in \mathbb{N}_0$ such that $n \geq p+q$. A family $\FF$ of sets over a universe $U$ of size $n$ is an {\em $(n,p,q)$-lopsided universal family} if for every $A \in {U \choose \leq p}$ and $B \in {U \setminus A \choose \leq q}$ there exists $F \in \FF$ such that
$ A \subseteq F$ and $B \cap F  = \emptyset$.
\end{definition}

\begin{lemma}[\cite{naor1995splitters}]
\label{lemma:lopsided-universal-family}
 There is an algorithm that given $n, p, q \in \mathbb{N}_0$ such that $n \geq p+q$, constructs an $(n,p,q)$-lopsided universal family $\FF$ of size ${p+q \choose p} \cdot 2^{o(p+q)} \cdot \log n$ in time ${p+q \choose p} \cdot 2^{o(p+q)} \cdot n \log n$.
\end{lemma}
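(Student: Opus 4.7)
The plan is to combine an explicit \emph{splitter} that shrinks the universe from size $n$ down to size polynomial in $k := p+q$ with a direct construction of a lopsided universal family on the shrunken universe. The first step is to invoke the Naor--Schulman--Srinivasan splitter construction to obtain a family $\mathcal{H}$ of functions $h : U \to [k^2]$ such that for every $S \in {U \choose k}$ some $h \in \mathcal{H}$ is injective on $S$. The known construction yields $|\mathcal{H}|$ bounded by a polynomial in $k$ times $\log n$, computable in time $|\mathcal{H}| \cdot \mathrm{poly}(n)$ via number-theoretic tools such as small-bias probability spaces.

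Next, I would construct a $(k^2, p, q)$-lopsided universal family $\FF'$ of size ${p+q \choose p} \cdot 2^{o(p+q)}$ on the reduced universe $[k^2]$. A random subset of $[k^2]$ that includes each element independently with probability $p/k$ satisfies any fixed constraint $A \subseteq F'$ and $B \cap F' = \emptyset$ with probability at least ${p+q \choose p}^{-1}$, so a random family of size ${p+q \choose p} \cdot O(k \log k)$ succeeds with positive probability by a union bound over the at most ${k^2 \choose \leq p}{k^2 \choose \leq q} \leq k^{2k}$ constraints. Since $k^2$ is independent of $n$, this can be derandomized by the method of conditional expectations in time that does not depend on $n$.

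Third, define $\FF = \{\, h^{-1}(F') : h \in \mathcal{H},\ F' \in \FF' \,\}$. Given disjoint $A \in {U \choose \leq p}$ and $B \in {U \setminus A \choose \leq q}$, extend $A \cup B$ arbitrarily to a set $S$ of size exactly $k$. By the splitter property some $h \in \mathcal{H}$ is injective on $S$, so $h(A)$ and $h(B)$ are disjoint subsets of $[k^2]$ of sizes at most $p$ and $q$; by the lopsided property of $\FF'$ some $F' \in \FF'$ satisfies $h(A) \subseteq F'$ and $h(B) \cap F' = \emptyset$, whence $h^{-1}(F')$ contains $A$ and is disjoint from $B$. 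The total size is $|\mathcal{H}| \cdot |\FF'| = {p+q \choose p} \cdot 2^{o(p+q)} \cdot \log n$, and the overall running time is dominated by enumerating the pairs $(h, F')$ and computing their preimages in $U$, giving $|\mathcal{H}| \cdot |\FF'| \cdot O(n) = {p+q \choose p} \cdot 2^{o(p+q)} \cdot n \log n$, matching the statement.

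The main obstacle is the splitter step: the elementary greedy construction of a $k$-perfect hash family from $[n]$ to $[k^2]$ only achieves size $2^{O(k)} \log n$, and beating this down to a polynomial dependence on $k$ requires the algebraic machinery of Naor--Schulman--Srinivasan built on small-bias spaces over finite fields, which is the genuinely technical ingredient of the argument; the probabilistic and combinatorial parts of the proof above are essentially routine once the splitter is in hand.
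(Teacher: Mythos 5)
The paper does not prove this lemma; it cites it directly from the Naor--Schulman--Srinivasan reference \cite{naor1995splitters}, so there is no in-paper proof to compare against. Your sketch is a faithful reconstruction of the standard argument behind that result (and of the exposition found, e.g., in the Cygan et al.\ textbook): reduce the universe via an $(n,k,k^2)$-splitter, build a small-universe lopsided universal family directly via a randomized construction derandomized by conditional expectations, and compose by taking preimages; you also correctly flag the splitter itself as the genuinely nontrivial ingredient, which is where the algebraic small-bias-space machinery is needed. One small quantitative slip: with $k=p+q$, the probability that a random subset (each element kept with probability $p/k$) satisfies a fixed constraint is $(p/k)^{|A|}(1-p/k)^{|B|} \ge (p/k)^p(q/k)^q$, which by the multinomial inequality $\binom{p+q}{p}p^p q^q \le (p+q)^{p+q}$ is at \emph{most} $\binom{p+q}{p}^{-1}$, not at least; by Stirling the true bound is $\Theta\bigl(\binom{p+q}{p}^{-1}/\sqrt{k}\bigr)$. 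This only costs an extra $\mathrm{poly}(k)$ factor in the family size, which is absorbed by the $2^{o(p+q)}$ slack, so your final bound still stands.
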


\medskip \noindent{\bf Problem Definitions and Some Known Results.}
We now define the problems relevant to our work. Here, the graph $G$ could be undirected or directed, unless stated otherwise.

\defparproblem{{\sc Longest $(s,t)$-Path}}{A graph $G = (V, E)$, vertices $s,t$ and a non-negative integer $k$.}{$k$}{Is there an $(s,t)$-path in $G$  of length at least $k$?}

We have the following algorithm by Fomin et al.~\cite{fomin2018long} for {\sc Longest $(s,t)$-Path} parameterized by the path length $k$ on directed graphs.

\begin{theorem}[\cite{fomin2018long}]
\label{theorem:longest-path-algo}
{\sc Longest $(s,t)$-Path} on directed (and undirected) graphs can be solved in $4.884^kn^{\OO(1)}$ time.
\end{theorem}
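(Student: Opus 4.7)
The plan is to derive the algorithm via the representative families technique over a uniform matroid, wrapped around a natural path-extension dynamic program. For each vertex $v \in V(G)$ and each length $i \in [0, k]$, I would maintain a family $\FF[v, i]$ of $(i+1)$-element vertex subsets, where each $S \in \FF[v, i]$ is the vertex set of some $(s,v)$-path of length $i$ in $G$. The transition is standard: $\FF[v, i+1]$ is obtained from $\bigcup_u \FF[u, i]$ over inneighbors $u$ of $v$, by adding $v$ to each set that does not already contain it. A YES-instance is detected whenever $\FF[t, k]$ becomes non-empty (with minor adjustments for paths of length at least $k$).

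The issue is that the raw families $\FF[v,i]$ may grow to size $n^{\Theta(k)}$. To control this, after each update I would replace $\FF[v, i]$ with a $(k-i)$-representative subfamily with respect to the uniform matroid of rank $k$ over $V(G)$: a subfamily $\FF'[v, i] \subseteq \FF[v, i]$ such that for every set $Y$ of at most $k-i$ vertices, if some $S \in \FF[v, i]$ is disjoint from $Y$, then so is some $S' \in \FF'[v, i]$. This preservation property is exactly what is needed for the DP to remain correct: any future extension of an $(s,v)$-path into an $(s,t)$-path of length at least $k$ corresponds to avoiding a set $Y$ of at most $k-i$ future vertices, and representativeness guarantees the witness survives.

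Next I would invoke the efficient algorithm for computing representative families over uniform matroids (based on the matrix-product formulation of Fomin, Lokshtanov, Panolan and Saurabh): one can maintain families of size $\binom{k}{i} \cdot 2^{o(k)}$ and perform each update in time $\binom{k}{i}^{\omega-1} \cdot 2^{o(k)}$ per vertex, where $\omega < 2.373$ is the matrix multiplication exponent. Summing over all lengths $i \in [0, k]$ and vertices $v$, the total running time is bounded by $\sum_{i=0}^{k} \binom{k}{i}^{\omega} \cdot 2^{o(k)} \cdot n^{\OO(1)}$, which, after optimizing the balance between family size and update cost using the bound $\binom{k}{i} \leq \binom{k}{k/2}$, gives the claimed $4.884^k \cdot n^{\OO(1)}$ running time.

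The main technical obstacle is achieving the precise constant $4.884$, which requires the fast matrix-multiplication-based construction of representative families rather than the naive exchange argument (the latter would give $c^k$ with a significantly worse base). A secondary subtlety is that the algorithm as stated finds a path of length exactly $k$; extending it to length \emph{at least} $k$ is handled either by iterating over target lengths $k, k+1, \ldots$ up to the diameter with early termination, or by modifying the DP so that once $i = k$ is reached one checks reachability to $t$ in the residual graph. Both of these adjustments only incur polynomial overhead and hence preserve the bound.
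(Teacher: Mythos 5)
There is a genuine gap in the correctness argument, and it is precisely the technical obstacle that makes \textsc{Longest $(s,t)$-Path} (length \emph{at least} $k$, with \emph{both} endpoints fixed) harder than plain $k$-\textsc{Path}. You claim that ``any future extension of an $(s,v)$-path into an $(s,t)$-path of length at least $k$ corresponds to avoiding a set $Y$ of at most $k-i$ future vertices.'' This is false: once the prefix reaches length $k$ at some vertex $v$, the remaining $(v,t)$-suffix is not bounded by $k-i$ --- it must continue all the way to $t$, which may be arbitrarily far, so it may use arbitrarily many vertices. Thus the set $Y$ your representative family must dodge is unbounded, and a $(k-i)$-representative family over the rank-$k$ uniform matroid gives no guarantee at all. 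In the extreme case $i=k$, the budget is $0$, so $\FF[t',k]$ is only guaranteed to be non-empty iff the true family is non-empty; it says nothing about whether any member can be completed to an $(s,t)$-path. Your two suggested patches do not repair this: iterating over all target lengths up to the diameter is not FPT in $k$ (you would pay $c^\ell$ for each $\ell$ up to $n$), and checking reachability from $v$ to $t$ in $G-(S\setminus\{v\})$ for $S\in\FF[v,k]$ fails exactly because representativeness at budget $0$ does not preserve extendibility.

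The cited algorithm of Fomin et al.\ has to do additional structural work to get around this. Roughly, one first argues (via BFS-layer reasoning on a shortest solution path, in the spirit of the good/bad-edge bookkeeping used elsewhere in this paper for detours) that the portion of the suffix that a replacement prefix could actually collide with has size $O(k)$, after which a separating/representative-family step with a \emph{bounded} budget becomes sound; the base $4.884$ comes out of balancing that bounded budget against the family size. Your running-time bound is also not right as stated: $\sum_{i=0}^{k}\binom{k}{i}^{\omega}\cdot 2^{o(k)}$ is dominated by $\binom{k}{k/2}^{\omega}\approx 2^{\omega k}\approx 5.18^k$ for the current $\omega$, which is not $4.884^k$; the known representative-set bound of $2.619^k$ for plain $k$-\textsc{Path} is obtained by a different optimization (using a ground matroid of rank $\alpha k$ with $\alpha>1$), and the $(s,t)$-variant pays a further overhead on top of that. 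So beyond the correctness gap, the constant would need to be rederived.
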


\defparproblem{{\ld}}{A graph $G = (V, E)$, vertices $s,t$ and a non-negative integer $k$.}{$k$}{Is there an $(s,t)$-path in $G$ of length at least $dist_G(s,t)+k$?}

The {\ld} problem was previously studied by Bezakova et al.~\cite{bezakova2017finding}. The authors showed that the problem is FPT on undirected graphs. 

\defproblem{{\sc $p$-Disjoint Paths}}{A graph $G = (V, E)$ and $p$ pairs of terminal vertices $(s_i, t_i), i \in \{1, \dotsc , p\}$ for a non-negative integer $p$.}{ Is there a multiset $\mathcal{P} = \{ P_i : i \in [p],$ $P_i$ is an $(s_i,t_i)$-path in $G\}$ of paths that are vertex disjoint except possibly at their endpoints?}.
Note that, here, two paths can share vertices as long as these vertices are endpoints of {\em both} paths.

Fomin et al.~\cite{fomin2022detours} studied {\ld} on directed graphs and proved the following theorem.

\begin{theorem}[\cite{fomin2022detours}]
\label{theorem:directed-3dp}
Let $\mathcal{C}$ be a class of graphs such that  {\sc $3$-Disjoint Paths} on $\mathcal{C}$ can be solved in polynomial time. Then, {\ld} on $\mathcal{C}$ can be solved in $2^{\OO(k)} n^{\OO(1)}$ time.
\end{theorem}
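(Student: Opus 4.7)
The plan is to use a shortest $(s,t)$-path $P^*$ as a combinatorial backbone and to guess the detour's structure relative to it. First, compute $P^* = v_0 v_1 \ldots v_d$ by BFS; if no $(s,t)$-path exists, reject. Any candidate detour $D$ of length at least $d+k$ decomposes into an alternating sequence of on-path subpaths (along $P^*$) and off-path excursions (through $V(G) \setminus V(P^*)$). Because $P^*$ is shortest, each excursion from $v_a$ to $v_b$ has length at least $b-a$; denoting the excess of each excursion by $e_j \geq 0$, the total excess satisfies $\sum_j e_j \geq k$. In particular, one may assume that the number of excursions with strictly positive excess is at most $k$.

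Next, apply Lemma~\ref{lemma:lopsided-universal-family} with $p,q = \OO(k)$ to produce a family $\FF$ of $2^{\OO(k)}\log n$ subsets of $V(G)$ that isolate a \emph{signature} of the detour, consisting of $\OO(k)$ critical vertices (the excursion endpoints on $P^*$ together with a witness from the interior of each excursion). For the correct detour, some $F \in \FF$ contains the signature while excluding a chosen set of ``forbidden'' vertices that would otherwise create unwanted crossings. For each such $F$, run a dynamic program over $F$ that processes excursions one at a time; for each candidate excursion, invoke $3$-Disjoint Paths on the graph obtained by contracting the portions of $G$ already resolved. Roughly, two of the three terminal pairs encode the two flanking pieces of the $(s,t)$-backbone, while the third encodes the excursion itself, which together with vertex-disjointness certifies a valid concatenation. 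As there are at most $k$ excursions, only $\OO(k)$ oracle calls are performed per guess, yielding the claimed $2^{\OO(k)}n^{\OO(1)}$ bound via the polynomial-time $3$-DP algorithm on $\CC$.

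The hard part is justifying that $3$-Disjoint Paths, rather than a general $p$-Disjoint Paths for larger $p$, suffices to stitch together all excursions while preserving vertex-disjointness globally. The intended argument is local-to-global: at each stage of the dynamic program, disjointness only needs to be enforced among three current objects (the prefix of the detour constructed so far, the next excursion, and the residual $P^*$-like backbone toward $t$), while global disjointness follows inductively by contracting each resolved piece before the next $3$-DP call. A secondary obstacle is that an excursion's length may itself be large in $n$, so one cannot simply apply Theorem~\ref{theorem:longest-path-algo} inside an excursion; instead, one must exploit the ``monotone'' shortest-path-like behavior of excursions between consecutive signature witnesses to certify them using connectivity (i.e., $3$-DP) rather than long-path enumeration.
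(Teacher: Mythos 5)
Note first that the paper does not reprove this theorem: it is cited from Fomin et al.~\cite{fomin2022detours} and then superseded by the paper's own Theorem~\ref{theorem:directed-2dp}, whose proof uses a closely related technique (the paper explicitly remarks that Fomin et al.\ also identify the layer $L_p$ and vertices $u,v$). So the relevant comparison is against that technique, which works with the BFS layer structure rather than with one fixed shortest path.

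Your proposal has a genuine gap at its core. Decomposing a detour $D$ relative to a fixed shortest path $P^*$ does not bound the number of excursions by $k$: you can have $\Theta(n)$ excursions each with excess $1$, still satisfying $\sum_j e_j \geq k$. The statement ``one may assume that the number of excursions with strictly positive excess is at most $k$'' is simply false, and it is load-bearing: without it the ``signature'' you want $\FF$ to hit has size $\Theta(n)$, not $\OO(k)$, so the lopsided-universal-family step collapses. A second, independent gap is the local-to-global claim for the DP. Contracting resolved pieces before the next $3$-DP call does not preserve the original disjointness constraints: a later $3$-DP call can route a path through a vertex that was used inside a contracted (hence hidden) excursion, and conversely contraction can create spurious connectivity that does not correspond to any vertex-disjoint routing in $G$. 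You would need a careful argument that contraction is both sound and complete here, and none is given. The actual argument avoids all of this: it identifies the \emph{first} BFS layer $L_p$ in which the (shortest) solution path $P$ has two vertices $u,v$, proves $|P_{u,v}| > k$ (Lemma~\ref{lemma:Puv-length}), color-codes only the length-$k$ prefix $P_{u,x}$ of $P_{u,v}$ against a set of $\OO(k)$ ``forbidden'' vertices near $v$ and $t$, and then makes a \emph{single} disjoint-paths call to stitch $x \to v$ and $v \to t$ (plus, in the $3$-DP version, $s\to u$). The point is that the whole combinatorial difficulty is localized to one branch point and one $\OO(k)$-length subpath, so a constant number of disjoint paths suffices and no induction over excursions is needed.
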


Bezakova et al.~\cite{bezakova2017finding} also gave an FPT algorithm for {\sc Exact $(s,t)$-Detour}, defined below.

\defparproblem{{\sc Exact $(s,t)$-Detour}}{A graph $G = (V, E)$, vertices $s,t$ and a non-negative integer $k$.}{$k$}{Is there an $(s,t)$-path in $G$ of length exactly $dist_G(s,t)+k$?}

\begin{theorem}[\cite{bezakova2017finding}]
\label{theorem:exact-detour-algo}
{\sc Exact Detour} on directed (and undirected) graphs can be solved in $6.745^k n^{\OO(1)}$ time.
\end{theorem}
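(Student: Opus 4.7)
The plan is to combine BFS layering with color coding. First, run BFS from $s$ to compute $d(v) := dist_G(s,v)$ for every vertex; let $d := d(t)$. Call an edge $(u,v)$ \emph{forward} if $d(v) = d(u) + 1$, and \emph{non-forward} otherwise. Any $(s,t)$-path $P = v_0, v_1, \ldots, v_{d+k}$ of length exactly $d+k$ must contain \emph{exactly} $k$ non-forward edges: each forward edge increases the BFS level by $+1$ while each non-forward edge changes it by $\leq 0$, so going from level $0$ to level $d$ in $d+k$ steps forces precisely $k$ non-forward steps.

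Consequently any candidate $P$ decomposes as $P_0 \circ e_1 \circ P_1 \circ \cdots \circ e_k \circ P_k$, where each $P_i$ uses only forward edges (hence is itself a shortest $(u_i,w_i)$-path in $G$) and $e_1, \ldots, e_k$ are non-forward edges connecting $w_i$ to $u_{i+1}$. I would next color the vertices of $G$ uniformly at random with $\OO(k)$ colors and search for a \emph{colorful} solution in which the color classes separate the subpaths $P_0, \ldots, P_k$ from one another. A dynamic program indexed by (current endpoint, BFS level, subset of colors already consumed), together with a subroutine that extends by one shortest subpath and one non-forward edge at a time, finds such a colorful solution in $2^{\OO(k)} n^{\OO(1)}$ time; the probability that a random coloring is compatible with any fixed solution is $\Omega(e^{-k})$, so we multiply by a $2^{\OO(k)}$ overhead for enumeration (or derandomise using Lemma~\ref{lemma:lopsided-universal-family}).

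The main obstacle is that color coding ``for free'' only enforces disjointness on vertices actually hit by the coloring, whereas each subpath $P_i$ may contain $\Theta(n)$ internal vertices. Handling this requires being careful about which part of $P$ is forced to be colorful: it suffices that the $2k$ anchor vertices $\{u_i, w_i\}_{i}$ be colorful and that the internal vertices of the forward subpaths be separable layer-by-layer, which can be guaranteed by a standard representative-sets argument tracking only polynomially many ``useful'' partial solutions at each BFS level. Optimising the trade-off between the number of colors (which governs the $2^{\OO(k)}$-type DP factor) and the colorful-hitting probability (which governs the enumeration overhead) is what yields the precise base $6.745$ in the running time.
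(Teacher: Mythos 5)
This theorem is cited from Bezakov\'{a} et al.~\cite{bezakova2017finding} and not proved in the paper, so there is no internal proof to compare against; I can only assess your proposal on its own terms, and I see two substantive problems.

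First, the opening structural claim is wrong as stated. You assert that an $(s,t)$-path of length exactly $dist_G(s,t)+k$ contains \emph{exactly} $k$ non-forward edges. Let $f$ and $b$ be the numbers of forward and non-forward edges. Each forward edge changes the BFS level by $+1$ and each non-forward edge changes it by at most $0$, so from $f+b=d+k$ and $f + \sum_{e\ \mathrm{non\text{-}forward}}\Delta_e = d$ you only get $f\ge d$ and hence $b\le k$; equality holds only when every non-forward edge has $\Delta_e=0$. A single edge that drops the BFS level by $k-1$ already gives a path of length $d+k$ with just one non-forward edge. This means the decomposition is into \emph{at most} $k+1$ forward segments, and the total amount of layer overlap can be concentrated on as few as two segments, which changes what the coloring step has to accomplish.

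Second, and this is the decisive gap: the disjointness of the forward segments is where all the work lies, and your sketch does not close it. Each forward segment can have $\Theta(n)$ internal vertices, and two segments can overlap on $\Theta(n)$ consecutive BFS layers after a single deep backward edge. Coloring only the $O(k)$ anchors $\{u_i,w_i\}$ says nothing about whether the two segments choose the same vertex in some shared layer $\ell$. The phrase ``separable layer-by-layer by a standard representative-sets argument'' is precisely the claim that needs a proof, and I do not see how it follows: a representative-family DP over BFS layers would need to track, for each layer, which of the up to $k+1$ active segments occupy which vertices there, and the number of layers with multiplicity $>1$ can be as large as $k$, so the naive state is not obviously polynomial. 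You would need an explicit bound showing the total number of ``excess'' layer visits is at most $k$ and then a DP that charges only against those, and none of that appears. Likewise, the stated base $6.745$ does not fall out of anything you wrote; ``optimizing the trade-off'' is a placeholder, not a derivation. As it stands, the proposal is a plausible direction but not a proof of the theorem, and its key structural premise needs to be repaired from ``exactly $k$'' to ``at most $k$.''
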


\section{FPT Algorithm for {\ld}}

In this section, we improve upon Theorem \ref{theorem:directed-3dp}: we define $\mathcal{C}$ as the class of graphs where {\sc $2$-Disjoint Paths} (rather than {\sc $3$-Disjoint Paths}) can be solved in polynomial time. Additionally, our algorithm has (arguably) a simpler proof compared to that of Fomin et al.~\cite{fomin2022detours}.

\subsection{Algorithm}
We present the pseudocode of our algorithm in Algorithm \ref{algorithm}. On a high-level, The algorithm calls Theorem \ref{theorem:exact-detour-algo} for {\sc Exact $(s,t)$-Detour} over all values $\ell \in [k,2k-1]$. Then, it iterates over every choice of three vertices $u,v,x \in V(G)$ and subset $F \subseteq \FF$, where $\FF$ is an $(n,k+1,3k+1)$-lopsided universal family over universe $V(G)$. For each such choice, it constructs a path (if one exists) that is the concatenation of:
\begin{enumerate}
    \item a shortest path from $s$ to $u$,
    \item a path of length $k$ from $u$ to $x$ whose vertices are within $F$ and has distance at least $dist_G(s,u)$ from $s$, and
    \item two vertex disjoint paths from $x$ to $v$ and $v$ to $t$.
\end{enumerate}
If such a path does not exist (for all choices), then the algorithm concludes that we have a NO-instance.


\begin{algorithm}
\caption{{\FPT} Algorithm for {\ld} on $\mathcal{C}$}\label{algorithm}
Input: A directed graph $G = (V, E)$, vertices $s,t$, and a non-negative integer $k$.

Output: YES if $G$ has an $(s,t)$-path of length at least $dist_G(s,t)+k$; NO otherwise.

\begin{algorithmic}[1]

\State Check if $t$ is reachable from $s$. If not, return NO.  
\State If $k=0$, return YES.
 \For{$\ell \in [k,2k-1]$} 
	\State 
	\parbox[t]{\dimexpr\linewidth-\algorithmicindent}{	Run the algorithm from Theorem \ref{theorem:exact-detour-algo} for parameter $\ell$. 
	If the algorithm returns YES, return YES.}
 \EndFor	
\State Compute $R \subseteq V(G)$, the set of vertices reachable from $s$ in $G$, using  breadth-first-search (BFS). Update $G$ to $G[R]$. Let {\em layer} $L_i$, $0 \leq i \leq r$, be the set of vertices at distance $i$ from $s$ in $G$, where $r$ is the maximum distance of a vertex in $G$ from $s$.
\State Construct an $(n,k+1,3k+1)$-lopsided universal family $\FF$ over universe $V(G)$ using Lemma \ref{lemma:lopsided-universal-family}.
 \For{$F \in \FF$} 
 \For{$p \in [r]$ and vertices $u,v,x \in V(G)$ where $u,v \in L_p$}

\State \parbox[t]{\dimexpr\linewidth-\algorithmicindent}{Find a shortest $(s,u)$-path $P_1$.}
\State \parbox[t]{\dimexpr\linewidth-\algorithmicindent}{Find a path $P_2$ from $u$ to $x$ of length $k$ in the graph $G[F \cap \bigcup_{i \geq p} L_i]$ (if one exists) using the algorithm in Theorem \ref{theorem:longest-path-algo}.} 
\State \parbox[t]{\dimexpr\linewidth-\algorithmicindent}{Check if in $G - (V(P_1) \cup V(P_2) \setminus \{x\})$ there are two vertex disjoint paths, one from $x$ to $v$ and the other from $v$ to $t$, using the polynomial time algorithm for {\sc $2$-Disjoint Paths} on $\mathcal{C}$. If the algorithm returns YES, return YES.}

\EndFor
\EndFor
\State Return NO.
\end{algorithmic}
\end{algorithm}

\subsection{Correctness}


 \noindent{\bf Steps 1-5.} We begin with the following observation.
 \begin{observation}\label{observation:Steps1-5}
 Steps 1-5 of Algorithm \ref{algorithm} are correct.
 \end{observation}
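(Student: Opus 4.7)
The plan is to verify each of Steps 1--5 individually and to highlight the structural consequence on which the rest of the algorithm depends: after Step 5 has been executed without returning YES, any solution path must have length at least $dist_G(s,t)+2k$. This sets up the layer argument used in the main loop.

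First I would dispose of Steps 1 and 2, which are immediate. If $t$ is not reachable from $s$, then there is no $(s,t)$-path at all, so NO is correct; and if $k=0$ (and $t$ is reachable, which we now know), then any shortest $(s,t)$-path already witnesses a YES-instance, since its length is $dist_G(s,t) = dist_G(s,t)+k$. For Steps 3--4, I would invoke Theorem \ref{theorem:exact-detour-algo} directly: for each $\ell \in [k,2k-1]$ the call correctly decides the existence of an $(s,t)$-path of length exactly $dist_G(s,t)+\ell$, and any such $\ell$ satisfies $\ell \geq k$, so reporting YES is always safe. Conversely, if all these calls return NO, no $(s,t)$-path has length in the interval $[dist_G(s,t)+k,\, dist_G(s,t)+2k-1]$, and therefore every $(s,t)$-path of length at least $dist_G(s,t)+k$ has length at least $dist_G(s,t)+2k$ — the key observation propagated into the remainder of the proof.

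For Step 5, I would observe that restricting $G$ to the subgraph induced by $R$, the set of vertices reachable from $s$, preserves the set of $(s,t)$-paths: every such path consists solely of vertices reachable from $s$, and $t\in R$ by Step 1. In particular, $dist_G(s,t)$ is unchanged, so the instance remains equivalent. The partition of $V(G)$ into layers $L_0,L_1,\ldots,L_r$ according to distance from $s$ is a standard by-product of the BFS used to compute $R$; no correctness issue arises from storing it. I do not expect any real obstacle here: the observation is a sanity check on preprocessing, and its only substantive content is the reduction, via Theorem \ref{theorem:exact-detour-algo}, to the case where the sought detour is at least $2k$ — a hypothesis the subsequent analysis will repeatedly exploit.
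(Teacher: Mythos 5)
Your proof is correct and follows essentially the same route as the paper: Steps 1 and 2 are immediate from the definitions, and Steps 3--5 (the loop over $\ell \in [k,2k-1]$) are correct by Theorem~\ref{theorem:exact-detour-algo}. One small slip in bookkeeping: what you label ``Step 5'' (computing $R$ and the BFS layers) is in fact Step~6 of Algorithm~\ref{algorithm}; Step~5 is the \texttt{EndFor} closing the detour loop, which needs no separate argument. Your observation that restricting to $G[R]$ preserves all $(s,t)$-paths and hence $dist_G(s,t)$ is sound and is left implicit in the paper, so including it is harmless, and your explicit statement that failure of the loop forces any solution path to have length at least $dist_G(s,t)+2k$ is exactly the content the paper records separately as Observation~\ref{observation:path-length}.
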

 \begin{proof}
The correctness of Step 1 follows as there is no $(s,t)$-path in $G$ if $t$ is not reachable from $s$. Given that $t$ is reachable from $s$, by the definition of $dist_G(s,t)$, there exists an $(s,t)$-path in $G$ of length $dist_G(s,t)+k$ if $k=0$. Hence, Step 2 of the algorithm is correct as well.

 The correctness of Steps 3-5 follows from Theorem \ref{theorem:exact-detour-algo}.
 \end{proof}

\medskip \noindent{\bf Steps 6-15.}
From here on, we suppose that the algorithm has reached Step 6. We start with the following observation, which follows as the algorithm did not terminate in Steps 3-5.

\begin{observation}\label{observation:path-length}
Any  solution path $P$ is of length at least $dist_G(s,t)+2k$.
\end{observation}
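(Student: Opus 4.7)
The plan is to prove the contrapositive: if some solution path $P$ has length strictly less than $dist_G(s,t) + 2k$, then the algorithm must have already returned YES during Steps 3--5, contradicting the assumption that we have reached Step 6.

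Concretely, I would start from the fact that, by the definition of the problem, any solution path $P$ satisfies $|P| \geq dist_G(s,t) + k$. Assume toward contradiction that $|P| < dist_G(s,t) + 2k$, so $|P| = dist_G(s,t) + \ell$ for some integer $\ell \in [k, 2k-1]$. In that case, when Step 4 of Algorithm \ref{algorithm} iterates through the value $\ell$, it invokes the algorithm guaranteed by Theorem \ref{theorem:exact-detour-algo} for {\sc Exact $(s,t)$-Detour}, which correctly decides whether $G$ contains an $(s,t)$-path of length exactly $dist_G(s,t) + \ell$. Since $P$ witnesses a YES-instance of {\sc Exact $(s,t)$-Detour} for this value of $\ell$, the subroutine returns YES, and the algorithm then returns YES in Step 4, contradicting the fact that it has progressed to Step 6.

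There is essentially no main obstacle here; the observation is immediate from the correctness of the {\sc Exact $(s,t)$-Detour} subroutine and the range $[k, 2k-1]$ chosen in the loop, both of which are already available.
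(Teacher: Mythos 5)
Your proof is correct and matches the paper's (implicit) reasoning: the paper simply states that the observation ``follows as the algorithm did not terminate in Steps 3--5,'' and your argument spells out exactly why, using the completeness of the \textsc{Exact $(s,t)$-Detour} subroutine over $\ell \in [k,2k-1]$. Nothing is missing.
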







Next, we identify two vertices in a (hypothetical) solution that will play a central role in the rest of the proof (see Figure \ref{fig:solution-path_uv}).

\begin{figure}
\includegraphics[scale=0.45]{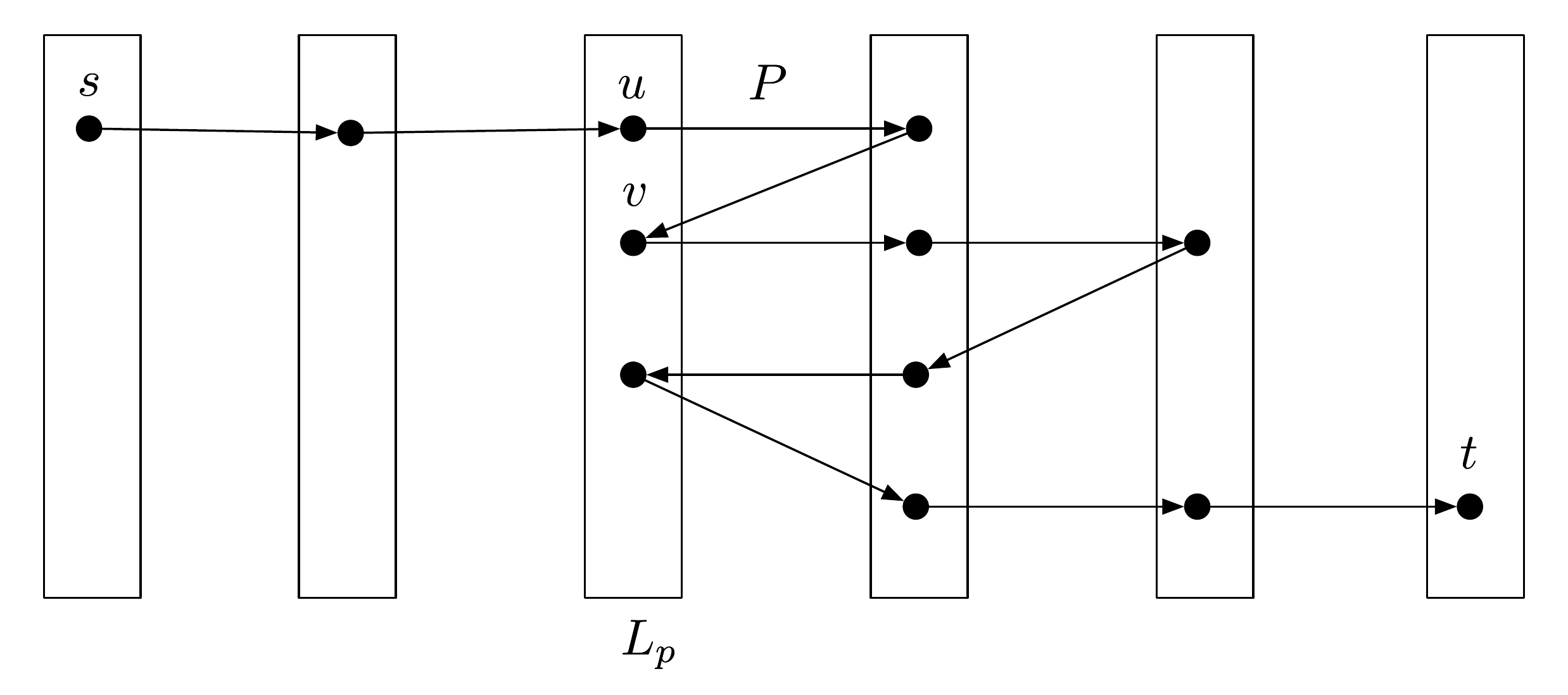}  
\caption{The shortest solution path $P$. Here, $L_p$ is the first layer having multiple vertices of $P$, and vertices $u$ and $v$ are the first two vertices of $P$ in this layer.}\label{fig:solution-path_uv}  
\end{figure}

\begin{definition}\label{definition:solution-path}
Let $P$ be a solution path of shortest length.
Let $p$ be the smallest integer $i \in [r]$ such that $L_i$ contains more than one vertex of $P$. Let $u$ and $v$ be the first and second vertices of $P$ in $L_p$.\footnote{ 
Since $t$ is in layer $L_{dist_G(s,t)}$ and $P$ is of length at least $dist_G(s,t) + 2k$ (in particular, strictly larger than $dist_G(s,t)$),  layer $L_p$ exists.}  
\end{definition}

We remark that the algorithm by Fomin et al.~\cite{fomin2022detours} also identifies layer $L_p$ and vertices $u$ and $v$ for the  solution path of shortest length $P$. 

Note that in one of the iterations of Step 9, integer $p$ and vertices $u$ and $v$ will be the ones defined in Definition \ref{definition:solution-path}. We have the following observation regarding the length of the subpath of $P$ from $s$ to $u$.

\begin{observation}\label{observation:Psu}
Let $P, p, u$ be as defined in Definition \ref{definition:solution-path}. The subpath $P_{s,u}$ of $P$ is a shortest $(s,u)$-path.
\end{observation}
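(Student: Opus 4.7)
The plan is to argue by contradiction, exploiting the BFS-layer structure together with the minimality of $p$ in Definition~\ref{definition:solution-path}. Since $u \in L_p$, we have $dist_G(s,u) = p$, so $P_{s,u}$ automatically satisfies $|P_{s,u}| \geq p$. It therefore suffices to rule out the possibility $|P_{s,u}| > p$.

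First I would enumerate $P_{s,u}$ as $s = v_0, v_1, \ldots, v_m = u$ and write $\ell_j$ for the layer index of $v_j$, so that $\ell_0 = 0$ and $\ell_m = p$. Because for any edge $(a,b)$ with $a \in L_i$ we have $b \in \bigcup_{q \leq i+1} L_q$ (this is the standard BFS-layer property, guaranteed by Step~6 of Algorithm~\ref{algorithm}), the sequence satisfies $\ell_{j+1} \leq \ell_j + 1$ for every $j$. Next, assuming for contradiction that $m > p$, the total increase $\ell_m - \ell_0 = p$ cannot be realised if every step strictly increased the layer, so there exists some index $j$ with $\ell_{j+1} \leq \ell_j$. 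I would let $i$ be the \emph{smallest} such index; then by construction $\ell_0, \ell_1, \ldots, \ell_i$ equal $0, 1, \ldots, i$, while $\ell_{i+1} \leq i$. Finally, I would observe that both $v_{\ell_{i+1}}$ and $v_{i+1}$ lie in layer $L_{\ell_{i+1}}$, and they are distinct (their indices on $P_{s,u}$ differ since $\ell_{i+1} \leq i < i+1$). Since $\ell_{i+1} \leq i < p$, this produces two distinct vertices of $P$ sharing a common layer $L_q$ with $q < p$, contradicting the choice of $p$ as the smallest such index.

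The main subtlety I expect is handling the possibility that $P_{s,u}$ briefly visits layers above $p$ and returns: a naive pigeonhole on the number of distinct layers visited does not immediately work, since the sequence $\ell_0, \ldots, \ell_m$ could exceed $p$ in intermediate positions. Focusing on the \emph{first} non-strictly-increasing step neatly sidesteps this, because up to that point the layers visited are exactly $\{0, 1, \ldots, i\}$, guaranteeing that the repeated layer $L_{\ell_{i+1}}$ witnessed by the pair $(v_{\ell_{i+1}}, v_{i+1})$ sits strictly below level $p$, which is precisely what is required to contradict the minimality of $p$.
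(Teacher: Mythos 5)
Your proof follows essentially the same strategy as the paper's: both hinge on the minimality of $p$ (each layer $L_i$ with $i<p$ holds at most one vertex of $P$), on the BFS-layer property that successive layer indices increase by at most one, and on the choice of $u$ as the \emph{first} vertex of $P$ in $L_p$. The paper concludes directly, by counting, that $P_{s,u}$ contains exactly one vertex per layer $L_0,\ldots,L_{p-1}$ plus $u\in L_p$, hence has length $p=dist_G(s,u)$; you instead reach a contradiction through a repeated layer in the sequence $\ell_0,\ldots,\ell_m$. That reformulation is sound in spirit, and your bookkeeping of the first non-increasing step is clean.

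However, there is a genuine gap: the inequality $i<p$ in your final step ``$\ell_{i+1}\leq i<p$'' is asserted but never justified, and it does not follow merely from choosing $i$ as the first non-increasing index. Concretely, the hypothetical layer sequence $0,1,\ldots,p,p+1,p$ has $m=p+2>p$, first non-increasing step at $i=p+1$, and $\ell_{i+1}=p$, which is \emph{not} strictly below $p$, so the contradiction with minimality of $p$ does not fire. What rules such a sequence out is the fact that $u$ is the \emph{first} vertex of $P$ in $L_p$ (Definition~\ref{definition:solution-path}), a property your argument never invokes: if $i\geq p$ then $\ell_p=p$, so $v_p\in L_p$ is a vertex of $P$ with index $p<m$ that strictly precedes $u=v_m$ on $P$, contradicting the choice of $u$. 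Adding this one sentence (thereby establishing $i<p$ and hence $\ell_{i+1}<p$) closes the gap; the paper's proof uses the same ``first vertex'' property, implicitly, to ensure $P_{s,u}$ meets no layer $L_j$ with $j\geq p$ except at $u$ itself.
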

\begin{proof}
Any $(s,u)$-path must contain at least one vertex from each layer $L_i$ with $0 \leq i \leq p$.
Recall that $p$ is the smallest integer $i \in [r]$ such that $L_i$ contains more than one vertex of $P$, and $u$ is the first vertex of $L_p$ in $P$. Thus, the subpath $P_{s,u}$ contains exactly one vertex from each layer $L_i$ with $0 \leq i < p$ and ends at $u \in L_p$. Hence, this subpath is a shortest $(s,u)$-path. 
\end{proof}
Next, we have the following lemma regarding the length of the subpath of $P$ from $u$ to $v$. The lemma statement was also proved by Fomin et al.~\cite{fomin2022detours}. But we give the proof here nevertheless for the sake of completeness.


\begin{lemma}\label{lemma:Puv-length}
Let $P,u,v$ be as defined in Definition \ref{definition:solution-path}. The length of the subpath $P_{u,v}$ of $P$ is larger than $k$.
\end{lemma}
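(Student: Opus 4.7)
My plan is to proceed by contradiction: suppose $|P_{u,v}| \leq k$ and construct a strictly shorter solution $(s,t)$-path $P'$, contradicting the minimality of $|P|$ from Definition \ref{definition:solution-path}. The candidate $P'$ is obtained by shortcutting the excursion $P$ makes on its way to $v$: take any shortest $(s,v)$-path $Q$ (of length $p$, since $v \in L_p$) and set $P' = Q \circ P_{v,t}$. As a walk, $|P'| = p + |P_{v,t}| = |P| - |P_{u,v}|$, which is already strictly less than $|P|$ because $u \neq v$ forces $|P_{u,v}| \geq 1$.

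The core of the argument, and the step I expect to be the main obstacle, is showing that $P'$ is actually a simple path, i.e., $V(Q) \cap V(P_{v,t}) = \{v\}$. My plan is to suppose some $w \neq v$ lies in both and derive a contradiction through a short case analysis. Because $Q$ is a shortest $(s,v)$-path of length $p$, it contains exactly one vertex from each of the layers $L_0, \ldots, L_p$, so $w \in L_j$ for some $j < p$. Let $u_j$ be the unique vertex of $V(P_{s,u})$ in $L_j$ (which exists by Observation \ref{observation:Psu}, since $P_{s,u}$ is a shortest $(s,u)$-path). If $u_j = w$, then $w \in V(P_{s,u}) \cap V(P_{v,t})$, which is impossible because $u \neq v$ and $P$ is simple. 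If $u_j \neq w$, then $u_j$ and $w$ are two distinct vertices of $P$ inside the layer $L_j$ with $j < p$, contradicting the minimality of $p$ in Definition \ref{definition:solution-path}. Either case gives a contradiction, so $P'$ is simple.

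To finish, I would combine the disjointness with Observation \ref{observation:path-length}, which gives $|P| \geq dist_G(s,t) + 2k$. Under the hypothesis $|P_{u,v}| \leq k$, this yields $|P'| = |P| - |P_{u,v}| \geq dist_G(s,t) + k$, so $P'$ is itself a solution path. Combined with $|P'| < |P|$ from the first paragraph, this contradicts the choice of $P$ as a shortest solution path, so $|P_{u,v}| > k$. All the ingredients — simplicity of $P'$, lower bound on its length, and strict shortening — are genuinely needed, but only the disjointness step uses the layer structure of $P$ in an essential way.
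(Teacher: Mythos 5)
Your proposal is correct and follows essentially the same route as the paper: concatenate an arbitrary shortest $(s,v)$-path with $P_{v,t}$, verify disjointness via the layer structure, and contradict minimality of $|P|$. The only cosmetic difference is that you derive the disjointness by a direct contradiction with the minimality of $p$, whereas the paper asserts (from the definitions of $L_p$ and $v$) that $P_{v,t}$ lies entirely in layers $L_i$ with $i \geq p$ --- these are the same underlying observation.
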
 
\begin{proof}

Targeting a contradiction, suppose the subpath $P_{u,v}$ of $P$ has length at most $k$. Then, the length of the subpath $P_{s,v} = P_{s,u} \circ P_{u,v}$, being $|P_{s,u}| + |P_{u,v}|$, is at most $dist(s,u) +k$ ($|P_{s,u}| = dist(s,u)$ from Observation \ref{observation:Psu}). Since $P = P_{s,v} \circ P_{v,t}$, it follows that the length of the subpath $P_{v,t}$, being  $|P| - |P_{s,v}|$, is at least $dist(s,t) + 2k - dist(s,u) - k$. 
 
 Let $P'$ be the walk formed by concatenating an arbitrary shortest $(s,v)$-path $\hat{P}$ with the subpath $P_{v,t}$ of $P$. Note that any shortest $(s,v)$-path consists only of vertices from layers $L_i$ with $0 \leq i < p$ except the vertex $v$. Moreover, by the definitions of $L_p$ and $v$, 
the subpath $P_{v,t}$ consists only of vertices from layers $L_i$ with $i \geq p$. Thus, $P_{v,t}$ is vertex disjoint from $\hat{P}$ (except for the endpoint $v$, which is the only common vertex). Therefore, the walk $P'$ is a path. From Observation \ref{observation:Psu}, we know that the subpath $P_{s,u}$ is a shortest $(s,u)$-path. So, $P = P_{s,u} \circ P_{u,v} \circ P_{v,t}$, $P' = \hat{P} \circ P_{v,t}$ and $|P_{s,u}| = |\hat{P}|$. Hence, the path $P'$ is of length $|P| - |P_{u,v}|$, which is at least $dist(s,t) + 2k - k > dist(s,t) + k$. However, $P'$ is shorter than $P$ as there is at least one edge in the subpath $P_{u,v}$. This contradicts that $P$ is a solution path of shortest length (as defined in Definition \ref{definition:solution-path}).
\end{proof}
Next, we identify three additional vertices in the (hypothetical) solution path $P$ (see Figure \ref{fig:solution-path-xyz}).
\begin{figure}
\includegraphics[scale=0.45]{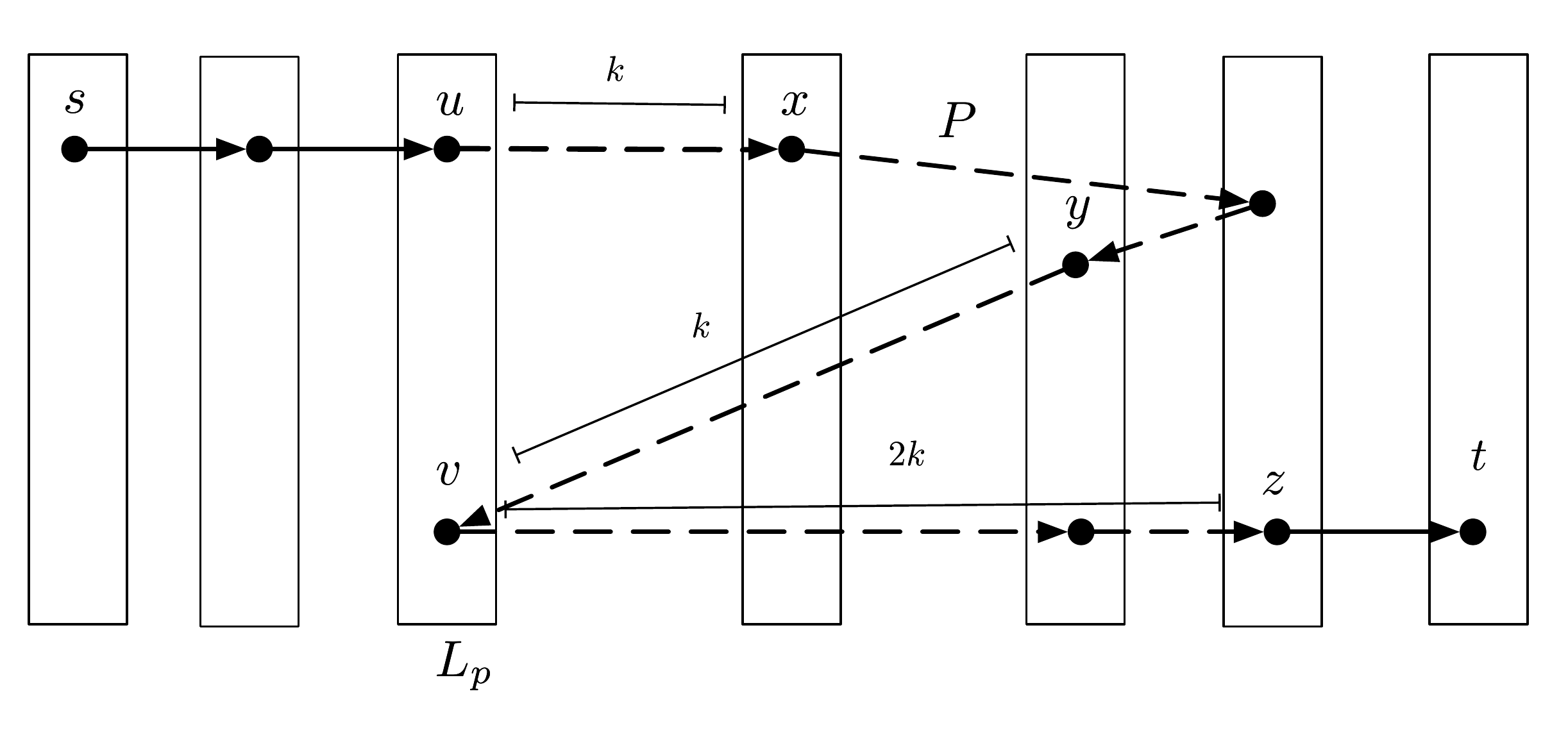}  
\caption{The shortest solution path $P$ with vertices $x,y$ and $z$ as defined in Definition \ref{definition:Pxyz}.}\label{fig:solution-path-xyz}  
\end{figure}

\begin{definition}\label{definition:Pxyz}
 Let $P,u,v$ be as defined in Definition \ref{definition:solution-path}. Let $x$ be last vertex of the subpath induced on the first $k+1$ vertices of $P_{u,v}$ (hence, the length of  $P_{u,x}$ is $k$). Also, let $y$ be the first vertex of the subpath induced on the last $k'$ vertices of $P_{u,v}$ where $k' = \min \{ k+1, |V(P_{x,v})|\}$. Finally, let $z$ be the last vertex of the subpath of  $P_{v,t}$ induced on the first $k''$ verices of $P_{v,t}$ where $k'' = \min \{ 2k+1, |V(P_{v,t})|\}$.
\end{definition}

Note that $x$ exists as the length of $P_{u,v}$ is larger than $k$ from Lemma \ref{lemma:Puv-length}. We remark that in one of the iterations of Step 9, the vertex $x$ will be the one defined in Definition \ref{definition:Pxyz}. Also, $y=x$ when $k' =|V(P_{x,v})|$, and  $z=t$ when $k'' =|V(P_{v,t})|$.

We  have the following lemma regarding $P_{u,x}$ and $P_{y,z}$.
\begin{observation}\label{observation:lopsided}
Let $\FF$ be the $(n,k+1,3k+1)$-lopsided universal family over universe $V(G)$ constructed in Step 7. Then, 
\begin{itemize}
    \item if $y \neq x$, there exists  $F \in \FF$ such that  $V(P_{u,x}) \subseteq F$ and $V(P_{y,z}) \cap F = \emptyset$.
    \item if $y = x$, there exists $F \in \FF$ such that  $V(P_{u,x}) \subseteq F$ and $(V(P_{y,z}) \setminus \{x\}) \cap F = \emptyset$.
\end{itemize}
\end{observation}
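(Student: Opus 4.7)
The plan is to apply the defining property of the $(n,k+1,3k+1)$-lopsided universal family $\FF$ directly. In both cases I would set $A = V(P_{u,x})$; by Definition \ref{definition:Pxyz}, $P_{u,x}$ has length exactly $k$, so $|A| = k+1$. In the first case ($y \neq x$) I would set $B = V(P_{y,z})$, and in the second case ($y = x$) I would set $B = V(P_{y,z}) \setminus \{x\}$. Showing $|A| \leq k+1$, $|B| \leq 3k+1$, and $A \cap B = \emptyset$ suffices, since the lopsided universal family then guarantees the required $F \in \FF$.

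To bound $|B|$, I would decompose $P_{y,z} = P_{y,v} \circ P_{v,z}$; these subpaths share only the vertex $v$. By construction $|V(P_{y,v})| = k' \leq k+1$ and $|V(P_{v,z})| = k'' \leq 2k+1$, so
\[
|V(P_{y,z})| = k' + k'' - 1 \leq (k+1) + (2k+1) - 1 = 3k+1,
\]
and in the $y = x$ case removing $x$ only makes $B$ smaller.

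Verifying $A \cap B = \emptyset$ is the crux. I would use that $P$ is simple, so the decomposition $P = P_{s,u} \circ P_{u,v} \circ P_{v,t}$ is into subpaths whose pairwise intersections are exactly the stated endpoints. Hence $V(P_{u,x}) \cap V(P_{v,z}) = \emptyset$ because $V(P_{u,x}) \subseteq V(P_{u,v})$, $V(P_{v,z}) \subseteq V(P_{v,t})$, and $v \notin V(P_{u,x})$ since $|P_{u,x}| = k < |P_{u,v}|$ by Lemma \ref{lemma:Puv-length}. For the intersection with $V(P_{y,v})$, I would unpack the $\min$ in the definition of $k'$: if $y \neq x$ then necessarily $k' = k+1 < |V(P_{x,v})|$, which places $y$ strictly after $x$ on $P_{u,v}$, making $V(P_{u,x})$ and $V(P_{y,v})$ disjoint; if $y = x$ then $V(P_{y,v}) = V(P_{x,v})$, and since $P_{u,v}$ is simple, the two halves $P_{u,x}$ and $P_{x,v}$ meet only at $x$, which is precisely the vertex we remove from $B$.

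The main obstacle is simply to handle the case split between $y = x$ and $y \neq x$ carefully and to track where the ``$-1$'' coming from the shared vertex $v$ in the concatenation $P_{y,v} \circ P_{v,z}$ lives. Once the bookkeeping is done, the observation is an immediate invocation of the definition of an $(n,k+1,3k+1)$-lopsided universal family.
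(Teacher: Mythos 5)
Your proposal is correct and follows essentially the same route as the paper's proof: take $A = V(P_{u,x})$, take $B = V(P_{y,z})$ (or $V(P_{y,z})\setminus\{x\}$ when $y=x$), bound $|B|$ via $|V(P_{y,v})|+|V(P_{v,z})|-1 = k'+k''-1 \le 3k+1$, check $A\cap B=\emptyset$, and invoke the defining property of the $(n,k+1,3k+1)$-lopsided universal family. The only difference is that you spell out the disjointness bookkeeping (via the case split on the $\min$ in $k'$ and simplicity of $P$) where the paper merely asserts it.
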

\begin{proof}
For the first case, we have $|V(P_{u,x})| = k+1$ and $|V(P_{y,z})| = |V(P_{y,v})| + |V(P_{v,z})| -1 = k' + k''-1 \leq k+1 + 2k+1-1 =  3k+1$. Moreover, $V(P_{u,x}) \cap V(P_{y,z})= \emptyset$. So, by the definition of $(n,k+1,3k+1)$-lopsided universal family over universe $V(G)$, there exists $F \in \FF$ such that  $V(P_{u,x}) \subseteq F$ and $V(P_{y,z}) \cap F = \emptyset$.
The statement follows for the second case as well as $|V(P_{y,z}) \setminus \{x\}| < |V(P_{y,z})| \leq 3k+1$ and $V(P_{u,x}) \cap (V(P_{y,z}) \setminus \{x\}) = \emptyset$.
\end{proof}



We proceed to classify the edges of any $(s,t)$-path as good or bad (see Figure \ref{fig:good-edge}). Intuitively, an edge is good if it corresponds to the first time the path enters a layer $L_i$ for $i \leq dist(s,t)$.
\begin{figure}
\includegraphics[scale=0.45]{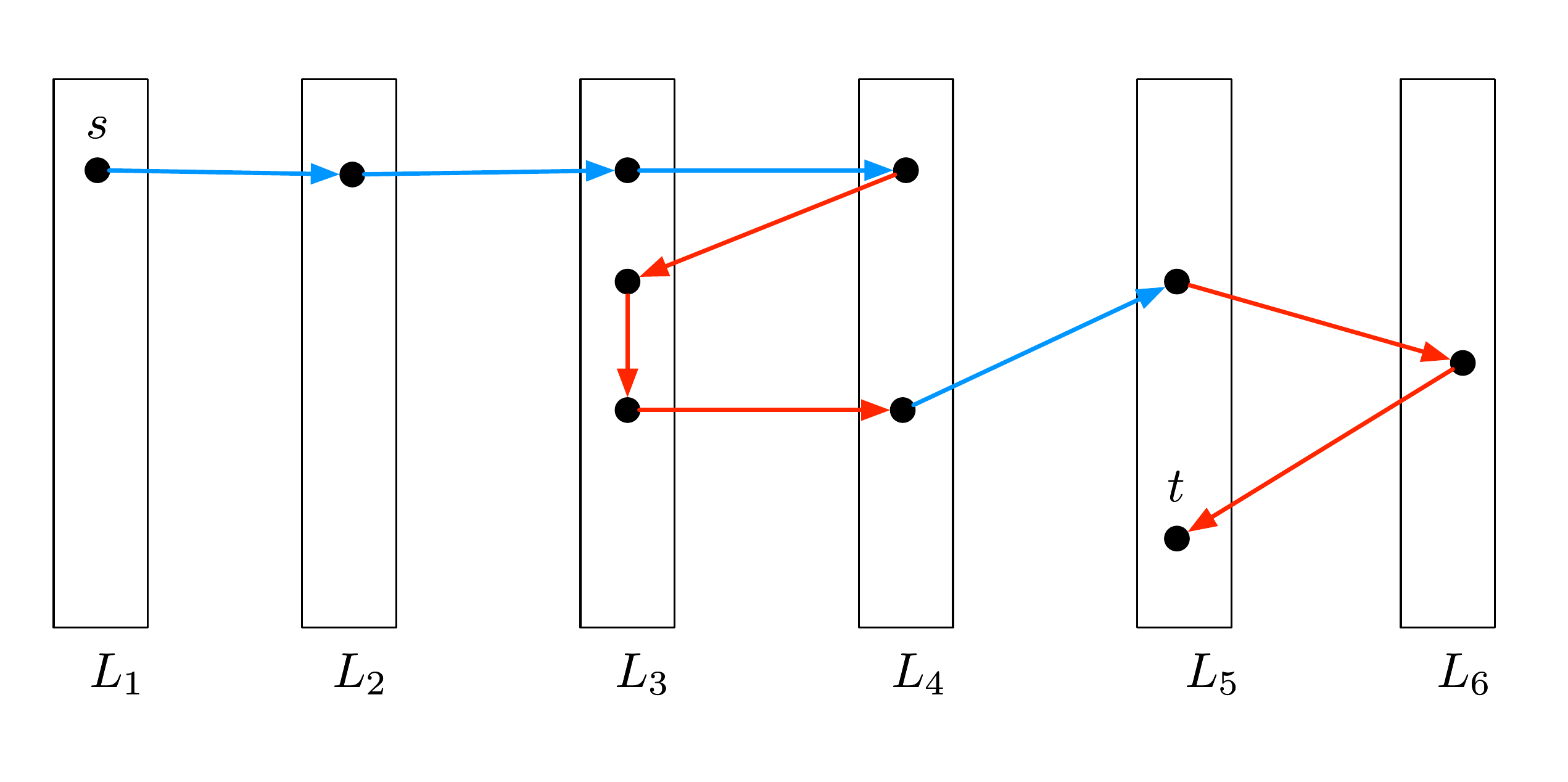}  
\caption{An $(s,t)$-path where the good edges are colored blue and the bad edges are colored red.}\label{fig:good-edge}  
\end{figure}
\begin{definition} Let $\hat{P} = u_1, u_2, \ldots u_q$ be any arbitrary $(s,t)$-path in $G$. An edge $(u_i, u_{i+1})$ of $\hat{P}$, with $1 \leq i \leq q-1$, is {\em good} if $i+1$ is the smallest integer in $[q]$ such that $ u_{i+1} \in L_j$ for some layer $L_j$ with $1 \leq j \leq dist_G(s,t)$; else $(u_i, u_{i+1})$ is {\em bad}.
\end{definition}

 In particular, the edges of any solution path can be partitioned into good edges and bad edges. Intuitively, the good edges can be viewed as the edges between consecutive layers $L_i$ and $L_{i+1}$, with $i <dist_G(s,t)$, that contribute to the distance metric $dist_G(s,t)$. We have the following observation.

\begin{observation}\label{observation:bad-edges}
If an $(s,t)$-path $\hat{P}$ has at least $k$ bad edges, then its length is at least $dist_G(s,t)+k$. 
\end{observation}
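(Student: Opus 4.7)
The plan is to show that any $(s,t)$-path $\hat{P} = u_1, u_2, \ldots, u_q$ has exactly $d := dist_G(s,t)$ good edges. Once that is established, since every edge of $\hat{P}$ is either good or bad, one immediately obtains $q - 1 = (\text{good edges}) + (\text{bad edges}) \geq d + k$, which is the desired length bound.

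The first ingredient is a monotonicity fact for the BFS layering along $\hat{P}$: writing $u_i \in L_{j_i}$, one has $j_{i+1} \leq j_i + 1$. This holds because concatenating a shortest $(s,u_i)$-path of length $j_i$ with the edge $(u_i, u_{i+1})$ yields a walk from $s$ to $u_{i+1}$ of length $j_i + 1$, so $dist_G(s, u_{i+1}) \leq j_i + 1$. Combined with $j_1 = 0$ and $j_q = d$, a discrete intermediate value argument then shows that $\hat{P}$ enters every layer $L_j$ with $j \in [1, d]$: fixing such a $j$, the index $i^\star := \max \{i : j_i < j\}$ exists (as $j_1 = 0 < j$) and satisfies $i^\star < q$ (as $j_q = d \geq j$); by maximality $j_{i^\star + 1} \geq j$, while the monotonicity bound gives $j_{i^\star + 1} \leq j_{i^\star} + 1 \leq j$, so $u_{i^\star + 1} \in L_j$.

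Finally, by definition, each layer $L_j$ with $j \in [1, d]$ visited by $\hat{P}$ contributes exactly one good edge, namely the one entering the first occurrence of a vertex of $L_j$ along the path, and every good edge arises this way. Combined with the previous step, this yields exactly $d$ good edges, so the length of $\hat{P}$ is at least $d + (\text{bad edges}) \geq d + k$. The main delicate point is making the intermediate value argument watertight in the directed setting, where layer indices along $\hat{P}$ can drop arbitrarily in a single step and only the upward jumps are controlled; everything else reduces to bookkeeping.
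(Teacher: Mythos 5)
Your proof is correct and follows essentially the same route as the paper's: both count at least one good edge per layer $L_j$ with $1 \le j \le dist_G(s,t)$, then add the bad-edge count. The only difference is that you make explicit, via the monotonicity bound $j_{i+1} \le j_i + 1$ and the discrete intermediate-value argument, the fact that $\hat{P}$ must visit every such layer, which the paper asserts as following immediately ``by the definition of good edges.''
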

\begin{proof}
Let $E(\hat{P}) = E_1 \cup E_2$ where $E_1$ and $E_2$ are the sets of good and bad edges of $\hat{P}$, respectively. We have $|E_2| > k$. By the definition of good edges, there is at least one good edge $(u,v) \in E(\hat{P})$ where $v \in L_j$ for each $1 \leq j \leq dist_G(s,t)$. Hence we have $|E_1| \geq dist_G(s,t)$. Therefore, the length of $\hat{P}$, being $|E_1|+|E_2|$, is at least $dist_G(s,t)+k$.
\end{proof}
Intuitively, the following lemma states that the intersection between the subpath $P_{v,t}$ and any path from $u$ to $x$ of length at most $k$, must belong to $P_{v,z}$ entirely.

\begin{lemma}\label{lemma:disjoint-from-k-layers}
Let $P,u,v,x$ and $z$ be as defined in Definitions \ref{definition:solution-path} and \ref{definition:Pxyz}. Let $P'$ be a path of length at most $k$ from $u$ to $x$ in the graph $G[\bigcup_{i \geq p} L_i]$. 
Then, the subpath $P_{v,z}$ contains all the vertices of $P_{v,t}$ that belong to $P'$.
\end{lemma}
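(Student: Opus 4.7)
The plan is to argue by contradiction. Suppose some $w \in V(P') \cap V(P_{v,t})$ lies outside $V(P_{v,z})$. The case $|V(P_{v,t})| \leq 2k+1$ is immediate, because then $z=t$ and $P_{v,z}=P_{v,t}$; so we may assume $|P_{v,t}| > 2k$, which forces $|P_{v,z}|=2k$ and in turn $|P_{v,w}| \geq 2k+1$.

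The strategy is to squeeze $|P_{w,t}|$ between two incompatible bounds. For the lower bound, since $P' \subseteq G[\bigcup_{i \geq p} L_i]$ has length at most $k$ and starts at $u \in L_p$, each vertex of $P'$---and $w$ in particular---sits in a layer $L_j$ with $p \leq j \leq p+k$. Consequently $dist_G(s,w) \leq p+k$, and therefore $|P_{w,t}| \geq dist_G(w,t) \geq dist_G(s,t)-p-k$.

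For the upper bound I would introduce a competitor $(s,t)$-path $Q := \hat{P} \circ P_{v,t}$, where $\hat{P}$ is any shortest $(s,v)$-path. By the layer argument already used in the proofs of Observation~\ref{observation:Psu} and Lemma~\ref{lemma:Puv-length}, $\hat{P}$ visits each of $L_0,\ldots,L_p$ exactly once (with $v$ being its only vertex in $L_p$), while $P_{v,t} \subseteq \bigcup_{i \geq p} L_i$; hence $\hat{P}$ and $P_{v,t}$ can only meet in $L_p$, and there only at $v$, making $Q$ a bona fide simple path of length $p+|P_{v,t}|$. Since $|P|-|Q|=|P_{u,v}| \geq k+1$ by Lemma~\ref{lemma:Puv-length}, $Q$ is strictly shorter than $P$, and therefore cannot itself be a solution by the minimality clause in Definition~\ref{definition:solution-path}. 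This forces $|Q| < dist_G(s,t)+k$, i.e.\ $|P_{v,t}| \leq dist_G(s,t)-p+k-1$, and consequently $|P_{w,t}|=|P_{v,t}|-|P_{v,w}| \leq dist_G(s,t)-p-k-2$.

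Putting the two bounds side by side yields $dist_G(s,t)-p-k \leq dist_G(s,t)-p-k-2$, i.e.\ $0 \leq -2$, the desired contradiction. The main obstacle I foresee is the simple-path verification for $Q$, which rests on the structural observation (implicit in the lemmas above) that $P$ visits each layer $L_i$ with $i<p$ exactly once and therefore $P_{v,t}$ stays away from those low layers entirely; once that is nailed down the remaining steps collapse to the short arithmetic clash above.
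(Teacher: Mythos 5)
Your proof is correct, and it takes a genuinely different and arguably cleaner route than the paper's. The paper's proof first introduces the machinery of ``good'' and ``bad'' edges (Observation~\ref{observation:bad-edges}), then runs a case analysis on the indices $j_1, j_2, j_3$ (the layer of $w$, the deepest layer touched by $P_{v,z}$, and the layer of $t$) to show that $P_{v,t}$ contains at least $k$ bad edges, and finally concludes that $\hat{P} \circ P_{v,t}$ would be a solution path strictly shorter than $P$, contradicting minimality. You construct the same competitor $Q = \hat{P} \circ P_{v,t}$ and also appeal to minimality of $P$, but you run the implication in the opposite direction: since $Q$ is strictly shorter than $P$, it cannot be a solution, so $|Q| < dist_G(s,t)+k$, which bounds $|P_{v,t}|$ from above; combined with $|P_{v,w}| \geq 2k+1$ this bounds $|P_{w,t}|$ from above, while the triangle inequality $|P_{w,t}| \geq dist_G(w,t) \geq dist_G(s,t) - dist_G(s,w) \geq dist_G(s,t) - p - k$ (valid because $w$ lies in some layer $L_j$ with $p \leq j \leq p+k$) bounds it from below, and the two clash. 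This bypasses the good/bad edge classification and the case split entirely; were the paper to adopt your argument, the definition of good/bad edges and Observation~\ref{observation:bad-edges} could be removed. The ``obstacle'' you flagged, namely that $Q$ is a simple path, is exactly the disjointness fact already established in the proof of Lemma~\ref{lemma:Puv-length}, so it is not a real gap.
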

\begin{proof}
  Let $\mathcal{L}'$ be the set of layers that contain at least one vertex in $P'$. Since the path $P'$ is of length at most $k$, $\mathcal{L}' \subseteq \{L_i :  p \leq i \leq p+k\}$. If the subpath $P_{v,t}$ has length at most $2k$, then we are done as $P_{v,z} = P_{v,t}$. Hence, we now suppose that the subpath $P_{v,z}$ of $P_{v,t}$ is of length $2k$. 
  Targeting a contradiction, suppose that $P_{v,t}$ contains a vertex $w \notin V(P_{v,z})$ but $w \in V(P')$. 
  Thus $w$ is contained in a layer in $\mathcal{L}'$. Choose such a $w$ that is at the closest distance to $v$ in $G$. Since $\mathcal{L}' \subseteq \{L_i :  p \leq i \leq p+k\}$, $w$ belongs to a layer $L_{p+j_1}$ with  $0 \leq j_1 \leq k$. 
  
  We now show that $P_{v,t}$ contains at least $k$ bad edges.
Let $j_2$ be the largest integer such that $L_{p+j_2}$ contains a vertex in $P_{v,z}$.
Finally, let $t \in L_{p+j_3}$ for $j_3 \geq 0$ (see Figure \ref{fig:j1-j2-j3-bad} ).
\begin{figure}
$(a)$ \\
\includegraphics[scale=0.45]{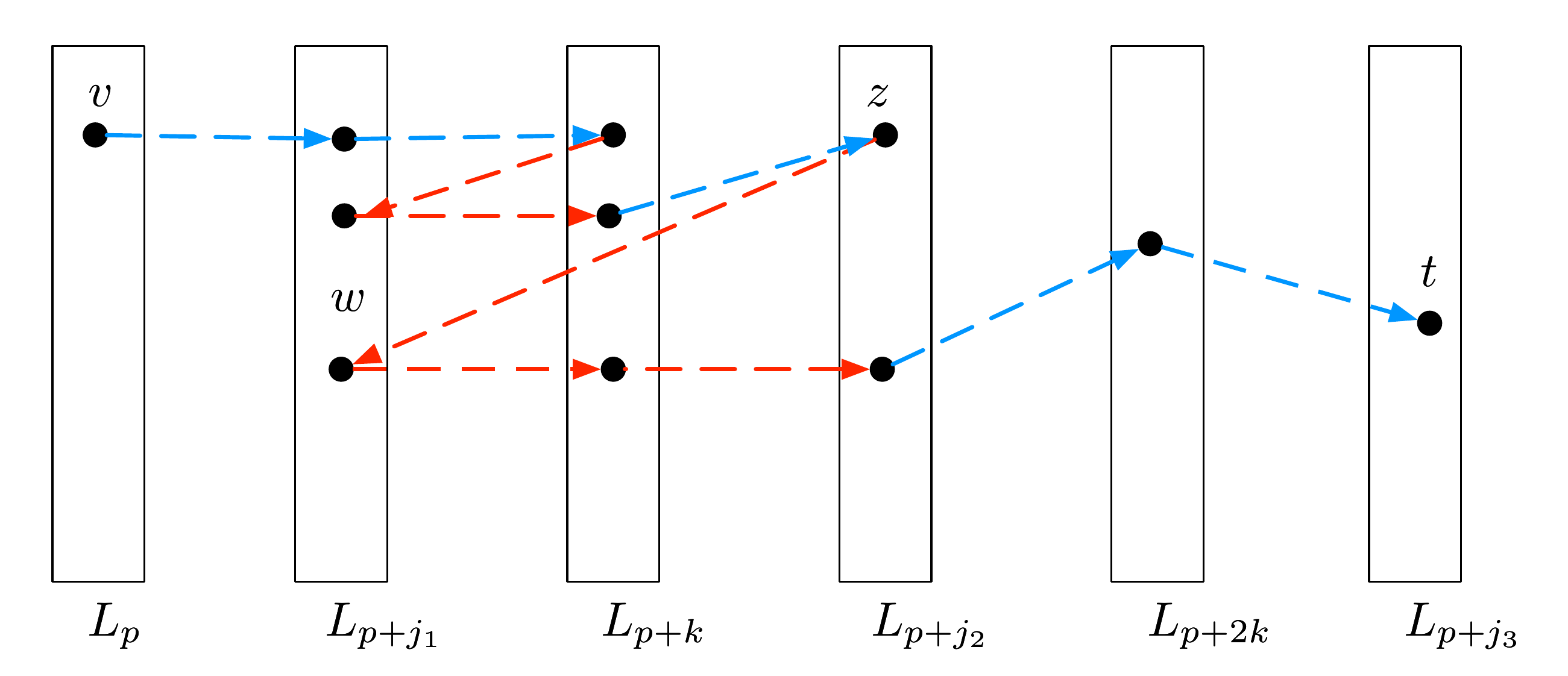}  
$(b)$ \\
\includegraphics[scale=0.45]{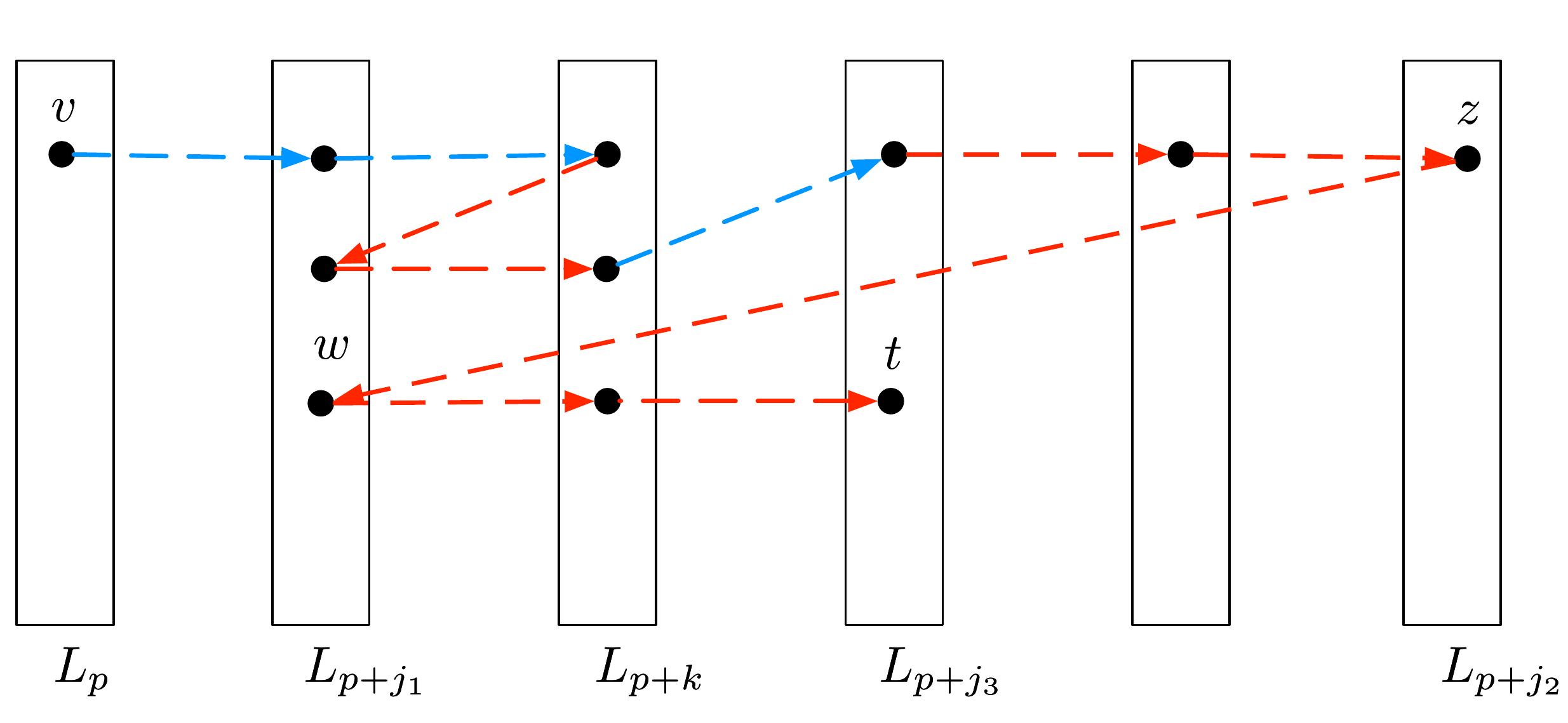} 
\caption{$j_1$, $j_2$ and $j_3$ with the bad edges of $P_{v,t}$ colored in red.}\label{fig:j1-j2-j3-bad} 

\end{figure}

Consider the following cases.

\smallskip \noindent {\bf Case $j_3 \geq j_2$} (See Figure \ref{fig:j1-j2-j3-bad} (a)): 
We first show that there are $2k-j_2$ bad edges of $P_{v,t}$ in the subpath $P_{v,z}$.
Since $j_2$ is the largest integer such that $L_{p+j_2}$ contains a vertex in $P_{v,z}$, all the  vertices of $P_{v,z}$ are contained within the set of layers $\{L_i :  p \leq i \leq p+j_2\}$. 
 Recall that an edge is good if its tail vertex corresponds to the first time the path $P_{v,t}$ enters a layer $L_i$ where $i \leq dist(s,t) = p+j_3$. At most $j_2$ of the edges of $P_{v,z}$ are good with its tail vertices corresponding to the first time the path enters a layer $L_i, p+1 \leq i \leq p+j_2$. Thus $|P_{v,z}| - j_2 = 2k-j_2$ edges of $P_{v,t}$ are bad edges. 

Let us look at the case when $j_2 \leq j_1$. In this case, we have at least $2k-j_2 \geq 2k - j_1 \geq 2k -k = k$ bad edges in the subpath $P_{v,t}$ as $j_1 \leq k$. 

We now show that when $j_2 > j_1$, there are $j_2-j_1$ bad edges of $P_{v,t}$ in the subpath $P_{w,t}$, which is disjoint from $P_{v,z}$. Since $w \in L_{p+j_1}$, $t \in L_{p+j_3} $ and $j_3 \geq j_2$, there is a prefix of the path $P_{w,t}$ from $w$ to a vertex in $L_{p+j_2}$ with all the vertices of the path within the layers $L_i$ with $p+j_1 \leq i \leq p+j_2$. 
All the edges of this prefix path are bad as the path $P_{v,t}$ has already visited these layers earlier in the subpath $P_{v,z}$. 

Hence, overall, we have $2k - j_2 + j_2 - j_1 = 2k - j_1 \geq 2k -k = k$ bad edges in the subpath $P_{v,t}$ as $j_1 \leq k$. 

\smallskip \noindent {\bf Case $j_3 < j_2$} (See Figure \ref{fig:j1-j2-j3-bad} (b)): 
Similarly to the previous case, we first show that there are $2k-j_3$ bad edges of $P_{v,t}$ in the subpath $P_{v,z}$.
 Recall that an edge is good if its tail vertex corresponds to the first time the path $P_{v,t}$ enters a layer $L_i$ where $i \leq dist(s,t) = p+j_3$. At most $j_3$ of the edges of $P_{v,z}$ are good with their tail vertices corresponding to the first time the path enters a layer $L_i, p+1 \leq i \leq p+j_3$. Thus $|P_{v,z}| - j_3 = 2k-j_3$ edges of $P_{v,t}$ are bad edges. 

Let us look at the case when $j_3 \leq j_1$. In this case, we have at least $2k-j_3 \geq 2k - j_3 \geq 2k -k = k$ bad edges in the subpath $P_{v,t}$ as $j_1 \leq k$. 

We now look at the case when $j_3 > j_1$. Since $w \in L_{p+j_1}$ and $t \in L_{p+j_3} $, there are at least $j_3 - j_1$  edges in the subpath $P_{w,t}$. Since $P_{v,z}$ has already visited the layers $L_i$ with $p+j_1 \leq i \leq p+j_3 < p +j_2$, all the edges of $P_{w,t}$ are bad.
Overall, we have $2k - j_3 + j_3 - j_1 = 2k - j_1 \geq 2k -k = k$ bad edges in the subpath $P_{v,t}$ as $j_1 \leq k$. 
\paragraph*{}In both cases, we conclude that there are at least $k$ bad edges in the  subpath $P_{v,t}$. Hence, from Observation \ref{observation:bad-edges}, any path formed by concatenating a shortest $(s,v)$-path with $P_{v,t}$ is of length at least $dist_G(s,t)+k$. This path is shorter than $P$ as there is at least one edge in the subpath $P_{u,v}$. This contradicts that $P$ is a shortest solution path (as defined in Definition \ref{definition:solution-path}).
%
%
%
%
%
\end{proof}

 The following lemma shows that the subpath $P_{x,y}$ does not intersect any path from $u$ to $x$ of length at most $k$, except for at the vertex $x$.
 
\begin{lemma}\label{lemma:path-xy-disjoint}
 Let $P,u,v,x,y$ and $z$ be as defined in Definitions \ref{definition:solution-path} and \ref{definition:Pxyz}. Any path $P'$ from $u$ to $x$ of size at most $k$ in the graph $G[\bigcup_{i \geq p} L_i]$ that is disjoint from $P_{y,z}$ is disjoint from $P_{x,v}$, except for at the vertex $x$.
\end{lemma}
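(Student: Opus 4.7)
My plan is to argue by contradiction. Assume that a vertex $w\in V(P')\cap V(P_{x,v})\setminus\{x\}$ exists; I will splice $P$ and $P'$ together at $w$ to build an $(s,t)$-path $\tilde P$ that is strictly shorter than $P$ yet still has length at least $dist_G(s,t)+k$, contradicting the minimality of $P$ from Definition \ref{definition:solution-path}. First I reduce the problem using the hypothesis $V(P')\cap V(P_{y,z})=\emptyset$ together with Lemma \ref{lemma:disjoint-from-k-layers}: the inclusion $V(P_{y,v})\subseteq V(P_{y,z})$ immediately gives $V(P')\cap V(P_{y,v})=\emptyset$, and Lemma \ref{lemma:disjoint-from-k-layers} yields $V(P')\cap V(P_{v,t})\subseteq V(P_{v,z})\subseteq V(P_{y,z})$, which is again empty. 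Hence any offending $w$ must actually lie in $V(P_{x,y})\setminus\{x\}$; in particular we are in the case $y\ne x$ (the case $y=x$ makes $V(P_{x,y})=\{x\}$ and the statement holds trivially).

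Among the candidates for $w$ in $V(P')\cap V(P_{x,y})\setminus\{x\}$, I pick the one closest to $y$ along $P_{x,y}$ (equivalently, the one maximizing $|P_{x,w}|$) and define $\tilde P := P_{s,u}\circ P'_{u,w}\circ P_{w,t}$. To check that $\tilde P$ is a simple path, I verify three disjointness conditions. (i) $P_{s,u}$ and $P'_{u,w}$ share only $u$: $P_{s,u}$ lies in layers $L_0,\dots,L_p$ and meets $L_p$ only at $u$, while $P'_{u,w}\subseteq G[\bigcup_{i\ge p}L_i]$. (ii) $P_{s,u}$ and $P_{w,t}$ are disjoint because they are non-overlapping subpaths of the simple path $P$. (iii) Decomposing $P_{w,t}=P_{w,y}\circ P_{y,v}\circ P_{v,t}$, the pieces $P_{y,v}$ and $P_{v,t}$ are already disjoint from $P'$ by the reductions above; for $P_{w,y}$, every interior vertex is strictly between $w$ and $y$ on $P_{x,y}$ and would contradict the choice of $w$ if it lay on $P'$, while $y\in V(P_{y,z})$ is also avoided by $P'$. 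Thus $P'_{u,w}\cap V(P_{w,t})=\{w\}$.

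It remains to compare lengths. Since $y\ne x$ forces $k'=k+1$ and hence $|P_{y,v}|=k$, the decomposition gives
\[
|\tilde P| \;=\; p+|P'_{u,w}|+|P_{w,y}|+|P_{y,v}|+|P_{v,t}| \;\ge\; p+k+(dist_G(s,t)-p) \;=\; dist_G(s,t)+k,
\]
where I used $|P_{v,t}|\ge dist_G(v,t)\ge dist_G(s,t)-p$ by the triangle inequality with $dist_G(s,v)=p$. For strict shortness, $|P|-|\tilde P|=|P_{u,w}|-|P'_{u,w}|\ge (k+|P_{x,w}|)-(|P'|-|P'_{w,x}|)\ge |P_{x,w}|+|P'_{w,x}|\ge 2$. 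Hence $\tilde P$ is a solution strictly shorter than $P$, contradicting Definition \ref{definition:solution-path}. The main technical hurdle is verifying simplicity in (iii): both the hypothesis on $P_{y,z}$ and the conclusion of Lemma \ref{lemma:disjoint-from-k-layers} must be applied simultaneously, and the specific choice of $w$ as closest to $y$ along $P_{x,y}$ is what rules out intersections of $P'_{u,w}$ with the interior of $P_{w,y}$.
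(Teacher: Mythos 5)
Your proof is correct, and it follows the same high-level strategy as the paper (splice $P'$ into $P$ at an offending vertex $w$, derive a contradiction with the minimality of $P$), but it diverges at a technically interesting point: you choose $w$ to be the offending vertex \emph{closest to $y$} along $P_{x,v}$, whereas the paper chooses $w$ \emph{closest to $x$}. This choice matters. The paper's concatenation $P_{s,u} \circ P'_{u,w} \circ P_{w,y} \circ P_{y,v} \circ P_{v,t}$ may fail to be a simple path, because $P'_{u,w}$ can re-intersect the interior of $P_{w,y}$; the paper therefore treats it as a walk and extracts a simple path from it, arguing that the edges of $P_{s,u}$, $P_{y,v}$ and $P_{v,t}$ cannot lie on any cycle of the walk and hence all survive the extraction. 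Your choice of $w$ eliminates this complication: by maximality of $|P_{x,w}|$, no interior vertex of $P_{w,y}$ can lie on $P'$, so your concatenation $\tilde P$ is already simple, and the length bounds go through directly without any walk-to-path cleanup. Both arguments also rely on the same two reductions (disjointness of $P'$ from $P_{y,v}$ via the hypothesis, and from $P_{v,t}$ via Lemma \ref{lemma:disjoint-from-k-layers}) and on $|P_{y,v}| = k$ in the $y \ne x$ case, and both dispose of $y = x$ by observing that then $P_{x,v} = P_{y,v} \subseteq P_{y,z}$ so there is nothing to prove. Your write-up is, if anything, a bit tighter than the paper's: the simplicity of $\tilde P$ is established cleanly, and the length comparison $|P| - |\tilde P| = |P_{u,w}| - |P'_{u,w}| \ge 2$ is explicit. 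One cosmetic remark: where you write $|\tilde P| = p + |P'_{u,w}| + |P_{w,y}| + |P_{y,v}| + |P_{v,t}|$, the terms $|P'_{u,w}|$ and $|P_{w,y}|$ are nonnegative slack that you then drop; this is fine, but it is worth noting that you only need $dist_G(s,v) + dist_G(v,t) \geq dist_G(s,t)$ (triangle inequality), which you indeed invoke correctly, whereas the paper writes it as an equality in passing.
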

\begin{proof}
Suppose not. Then, let $w$ be the vertex in $P_{x,v}$ that is closest to $x$ and is contained in $P'$. 
Consider the walk $W$ from $s$ to $t$ defined by $P_{s,u} \circ P'_{u,w} \circ P_{w,y} \circ P_{y,v} \circ P_{v,t}$. If $k' < k$, we have $x=y$, and the claim follows as $P_{x,v} =P_{y,v}$. Hence $|P_{y,v}| = k$. Note that $|P_{s,u}| = |P_{s,v}| = dist_G(s,v)$ as $u,v \in L_p$. So, we have $|P_{s,u}| + |P_{y,v}| + |P_{v,t}| \geq dist_G(s,v) + k + dist_G(v,t) = dist_G(s,t) +k$ as $u,v \in L_p$. The walk $W$ contains the subpaths $P_{s,u}$, $P_{y,v}$ and $P_{v,t}$ which are vertex disjoint except the vertex $v$ which is the last vertex of $P_{y,v}$ and the first vertex of $P_{v,t}$. 
Since $P'_{u,w}$ is disjoint from $P_{v,z}$, by Lemma \ref{lemma:disjoint-from-k-layers}, it is disjoint from $P_{v,t}$. Thus any cycle in the walk $W$ is contained in the sequence $P'_{u,w} \circ P_{w,y}$. Thus, the edges of the paths $P_{s,u}$, $P_{y,v}$ and $P_{v,t}$ are not part of any cycles in the walk $W$. Hence, $W$ contains a path $\hat{P}$ of length that at least the sum of these paths which is at least $dist_G(s,t) +k$. Note that the path $\hat{P}$ is strictly shorter than $P$ as $|P'_{u,w}| < |P'| \leq k = |P_{u,x}|$. This contradicts that $P$ is the shortest solution path as defined in Definition \ref{definition:solution-path}.
\end{proof}

We have the following corollary from Lemmas \ref{lemma:disjoint-from-k-layers} and \ref{lemma:path-xy-disjoint}.

\begin{corollary}\label{corollary:path-u-to-x-disjoint}
Let $P,u,v,x$ and $z$ be as defined in Definitions \ref{definition:solution-path} and \ref{definition:Pxyz}. Any path $P'$ from $u$ to $x$ of size at most $k$ that is disjoint from $P_{y,z}$ is disjoint from the subpath $P_{x,t}$ of $P$.    
\end{corollary}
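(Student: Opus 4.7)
The plan is to obtain Corollary \ref{corollary:path-u-to-x-disjoint} as a straightforward amalgamation of the two preceding lemmas. I would decompose the subpath $P_{x,t}$ of $P$ as the concatenation $P_{x,v} \circ P_{v,t}$, where the two pieces share only the vertex $v$, and then verify disjointness with $P'$ on each piece separately.

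For the first piece, I would simply invoke Lemma \ref{lemma:path-xy-disjoint}, which asserts that any $P'$ of the required form is disjoint from $P_{x,v}$ except at the endpoint $x$. Nothing new needs to be argued here.

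For the second piece, $P_{v,t}$, I would proceed in two steps. First, Lemma \ref{lemma:disjoint-from-k-layers} yields that every vertex of $P_{v,t}$ which appears on $P'$ must in fact lie in the shorter prefix $P_{v,z}$. Second, I would observe that $P_{v,z}$ is a subpath of $P_{y,z}$: indeed, $y$ lies on $P_{u,v}$ weakly before $v$ and $z$ lies on $P_{v,t}$ weakly after $v$, so along $P$ we have $P_{y,z} = P_{y,v} \circ P_{v,z}$. Since the hypothesis guarantees that $P'$ is disjoint from $P_{y,z}$, it is in particular disjoint from $P_{v,z}$, and combining with the first step, $P'$ is disjoint from $P_{v,t}$ in its entirety.

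Putting the two pieces together shows that $P'$ meets $P_{x,t}$ only at $x$, as required. I do not foresee any real obstacle; the only point needing care is the degenerate case $y = x$, where strictly speaking $x \in V(P_{y,z})$ and the hypothesis has to be read in the spirit of Observation \ref{observation:lopsided}, namely that $P'$ avoids $V(P_{y,z}) \setminus \{x\}$. In this situation $P_{y,v}$ collapses and $P_{v,z}$ remains a subpath of $P_{y,z}$, so the argument above still goes through verbatim, using additionally that $x \notin V(P_{v,t})$ because $P$ is a simple path.
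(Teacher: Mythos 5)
Your proposal is correct and follows essentially the same route as the paper: both split $P_{x,t}$ into $P_{x,v}$ and $P_{v,t}$, handle the former by Lemma~\ref{lemma:path-xy-disjoint} and the latter by Lemma~\ref{lemma:disjoint-from-k-layers} together with the containment $V(P_{v,z})\subseteq V(P_{y,z})$. Your version is a bit more careful than the paper's terse argument, in that you spell out why $P_{v,z}$ sits inside $P_{y,z}$ and you explicitly address the degenerate case $y=x$, which the paper leaves implicit.
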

\begin{proof}
From Lemma \ref{lemma:disjoint-from-k-layers}, we can conclude that $P'$ is disjoint from $P_{v,t}$ as it is disjoint from $P_{v,z}$. From Lemma \ref{lemma:path-xy-disjoint}, we can conclude that $P'$ is disjoint from $P_{x,v}$, except for at the vertex $x$. Thus, the corollary follows.
\end{proof}
We now prove that Steps 8-15 of the algorithm are correct using the following definition and lemma. First, identify some paths between pairs of  vertices in $u,v,x,y,z$ (see Figure \ref{fig:algorithm-paths}).

\begin{figure}
\includegraphics[scale=0.45]{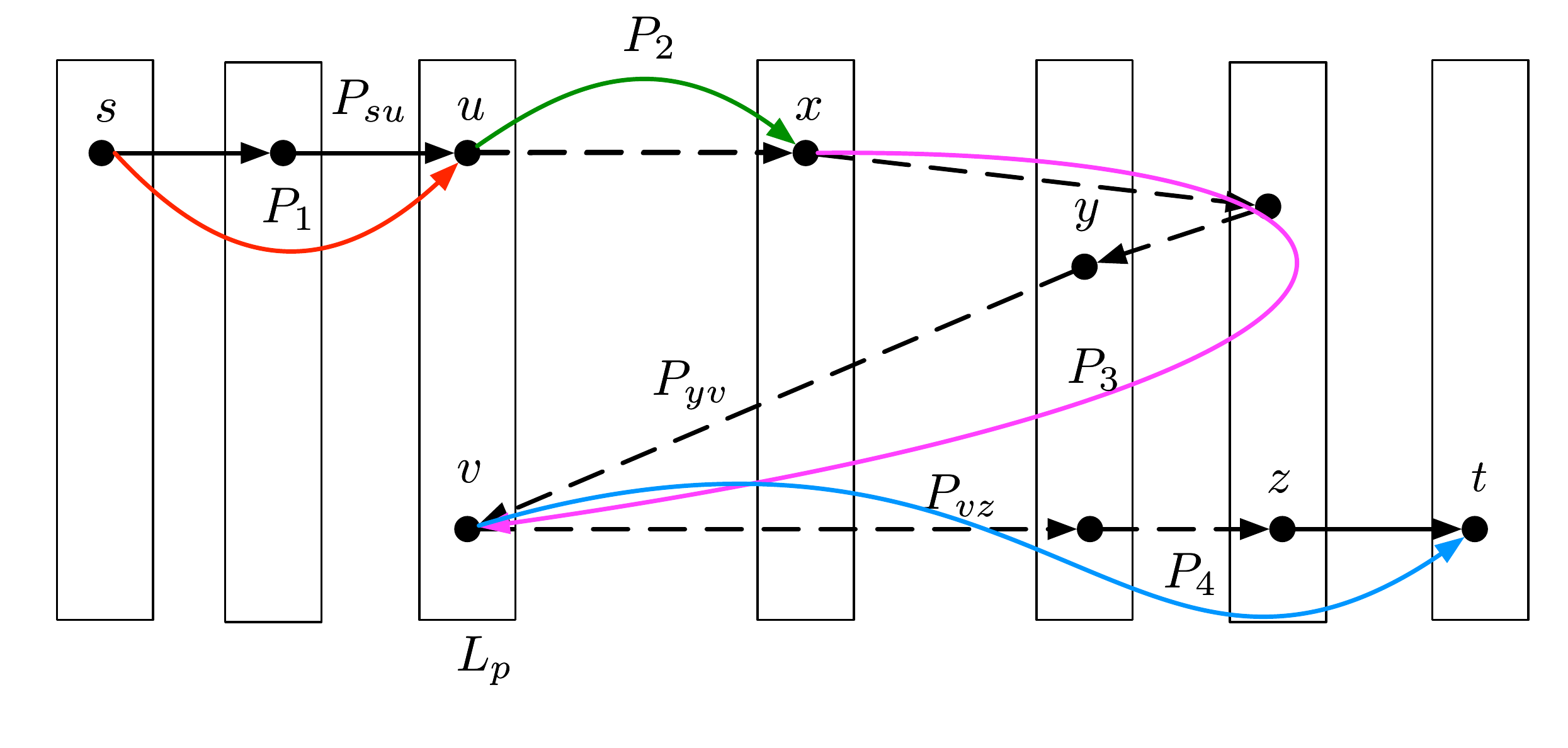}  
\caption{The solution path corresponding to the algorithm in Steps 8-15.}\label{fig:algorithm-paths} 
\end{figure}

\begin{definition}\label{definition:path-conc}
Let $P,u,v,x,y,z$ be as defined in Definitions \ref{definition:solution-path} and \ref{definition:Pxyz}. Let $P_1$ be an arbitrary shortest $(s,u)$-path. Let $P_2$ be a path from $u$ to $x$ of length $k$ that is disjoint from $P_{y,z}$. Let $P_3$ and $P_4$ be internally vertex disjoint paths from $x$ to $v$ and $v$ to $t$,  respectively, in the graph $G - (V(P_1) \cup V(P_2) \setminus \{x\})$.
\end{definition}

We now show that the concatenation of the paths defined above results in a solution path. 

\begin{lemma}\label{lemma:computed-path-length}
 The path $P' = P_1 \circ P_2 \circ P_3 \circ P_4$ is a path of length at least $dist_G(s,t) +k$.
\end{lemma}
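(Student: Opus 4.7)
The plan is to split the proof into two parts. First I would establish that $P'$ is a simple path by verifying pairwise vertex-disjointness among $P_1, P_2, P_3, P_4$ modulo the intended concatenation points $u, x, v$; then I would derive the length bound from the layer positions of $u$ and $v$.

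For the disjointness, since $P_1$ is a shortest $(s,u)$-path it contains exactly one vertex from each of the layers $L_0, \ldots, L_p$ and no vertex from any $L_i$ with $i > p$; in particular $V(P_1) \cap L_p = \{u\}$. The algorithm restricts $P_2$ to lie in $G[F \cap \bigcup_{i \geq p} L_i]$, so $V(P_2) \subseteq \bigcup_{i \geq p} L_i$ and thus $V(P_1) \cap V(P_2) \subseteq V(P_1) \cap L_p = \{u\}$. The constraint $V(P_3) \cup V(P_4) \subseteq V(G) \setminus (V(P_1) \cup V(P_2) \setminus \{x\})$ immediately yields $V(P_i) \cap V(P_1) = \emptyset$ and $V(P_i) \cap V(P_2) \subseteq \{x\}$ for $i \in \{3,4\}$. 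Finally, $V(P_3) \cap V(P_4) = \{v\}$ by the 2-disjoint paths convention that paths are vertex-disjoint except at shared endpoints; the same convention rules out $x \in V(P_4)$ since $x$ is an endpoint of $P_3$ but not of $P_4$, hence $V(P_2) \cap V(P_4) = \emptyset$. Together these imply that the only vertices shared by consecutive pieces of $P'$ are the concatenation vertices $u$, $x$, and $v$, so $P'$ is a simple $(s,t)$-path.

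For the length bound I would compute $|P'| = |P_1| + |P_2| + |P_3| + |P_4|$. Since $u, v \in L_p$ we have $|P_1| = dist_G(s, u) = p$, and $|P_4| \geq dist_G(v, t)$; by construction $|P_2| = k$ and $|P_3| \geq 0$. The triangle inequality gives $dist_G(s, t) \leq dist_G(s, v) + dist_G(v, t) = p + dist_G(v, t)$, so $|P'| \geq p + k + dist_G(v, t) \geq dist_G(s, t) + k$, as required.

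The main delicate point is the step $V(P_2) \cap V(P_4) = \emptyset$, which requires combining the graph-deletion constraint that defines $P_3, P_4$ with the 2-disjoint paths convention on shared endpoints in order to push $x$ out of $V(P_4)$. Once the six pairwise disjointness relations are in hand, the length inequality is essentially bookkeeping with the layer memberships of $u$ and $v$.
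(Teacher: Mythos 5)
Your proof is correct and follows the same strategy as the paper's: establish that the four pieces are pairwise disjoint (up to the concatenation vertices $u$, $x$, $v$) and then sum their lengths. Your version is in fact tighter in two places where the paper's own write-up is loose: you explicitly derive $x \notin V(P_4)$ from the endpoint-sharing convention in the $2$-Disjoint Paths definition rather than asserting disjointness ``by definition,'' and for the length bound you correctly chain $|P'| \geq dist_G(s,u) + k + dist_G(v,t) = dist_G(s,v) + k + dist_G(v,t) \geq dist_G(s,t) + k$ using $u,v \in L_p$ together with the triangle inequality, whereas the paper's intermediate step passing through $dist_G(u,t)$ and the final equality $dist_G(s,u)+dist_G(u,t)=dist_G(s,t)$ do not hold in general.
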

\begin{proof}
The paths $P_3$ and $P_4$ are vertex disjoint from each other and are vertex disjoint from $P_1$ and  $P_2$ by definition. The vertex set of the path $P_1$ is contained in $\bigcup_{i \in [p]} L_i$ with only the last vertex $u$ belonging to layer $L_p$. Also, the path $P_2$ is contained in $\bigcup_{i \geq p} L_i$. Hence, $P_1$ and $P_2$ are internally vertex disjoint as well. We can thus conclude that $P'$ is a path, and that $|P'| = |P_1| + |P_2| + |P_3| + |P_4| \geq dist_G(s,u) + k + |P_3| + dist_G(v,t) \geq dist_G(s,u) + k + dist_G(u,t) = dist_G(s,t) + k$.
\end{proof}

Finally, we prove the correctness of Steps 6-15 of the algorithm by showing that the algorithm indeed computes a collection of paths as defined in Definition \ref{definition:solution-path}.

\begin{lemma}\label{lemma:Steps6-15}
Steps 6-15 of Algorithm \ref{algorithm} are correct.
\end{lemma}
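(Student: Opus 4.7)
The plan is to split the proof into soundness and completeness. Soundness is immediate: whenever Step 12 declares YES, the algorithm has explicitly produced $P_1, P_2, P_3, P_4$ as in Definition \ref{definition:path-conc}, so Lemma \ref{lemma:computed-path-length} directly yields an $(s,t)$-path of length at least $dist_G(s,t)+k$. So the real content is the completeness direction.

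For completeness, assume a solution exists; since Steps 3-5 did not return YES, Observation \ref{observation:path-length} guarantees a shortest solution path $P$ of length at least $dist_G(s,t)+2k$. I would attach to $P$ the quantities $p,u,v$ from Definition \ref{definition:solution-path} and $x,y,z$ from Definition \ref{definition:Pxyz}, then use Observation \ref{observation:lopsided} to produce $F \in \FF$ with $V(P_{u,x}) \subseteq F$ and $(V(P_{y,z}) \setminus \{x\}) \cap F = \emptyset$. The target is to show that in the iteration of Steps 8-9 that picks exactly this $F$ and these $p,u,v,x$, the check at Step 12 succeeds.

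The inputs to Steps 10-12 would then be set up as follows. A shortest $(s,u)$-path $P_1$ exists, and any such path satisfies $V(P_1) \subseteq \bigcup_{i < p} L_i \cup \{u\}$. For Step 11, I would use $P_{u,x}$ itself as the witness that a path of length $k$ from $u$ to $x$ exists inside $G[F \cap \bigcup_{i \geq p} L_i]$: it has length $k$ by Definition \ref{definition:Pxyz}, lies in $F$ by the choice of $F$, and lies in layers of index at least $p$ by a preliminary fact that I would record once, namely $V(P_{u,v}) \cup V(P_{v,t}) \subseteq \bigcup_{i \geq p} L_i$ (if an internal vertex lay in $L_j$ with $j<p$, it would join the unique $L_j$-vertex of the shortest prefix $P_{s,u}$ in giving $L_j$ two vertices of $P$, contradicting the minimality of $p$ together with simplicity of $P$). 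Theorem \ref{theorem:longest-path-algo} then returns some $P_2$ of length $k$ from $u$ to $x$ in $G[F \cap \bigcup_{i \geq p} L_i]$, although possibly $P_2 \neq P_{u,x}$; the key property surviving is $V(P_2) \subseteq F$, which via Observation \ref{observation:lopsided} forces $V(P_2) \cap (V(P_{y,z}) \setminus \{x\}) = \emptyset$.

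It then remains to argue that $P_{x,v}$ and $P_{v,t}$ themselves serve as $P_3, P_4$ inside $G - (V(P_1) \cup V(P_2) \setminus \{x\})$, which would cause the $2$-Disjoint Paths subroutine in Step 12 to return YES. Disjointness of $P_{x,v} \cup P_{v,t}$ from $V(P_2) \setminus \{x\}$ follows from Corollary \ref{corollary:path-u-to-x-disjoint} applied to $P_2$. Disjointness from $V(P_1)$ is immediate from the layer preliminary: $V(P_1) \subseteq \bigcup_{i < p} L_i \cup \{u\}$ whereas $V(P_{x,v}) \cup V(P_{v,t}) \subseteq \bigcup_{i \geq p} L_i$, and the only possible common vertex $u$ lies on neither $P_{x,v}$ nor $P_{v,t}$ by simplicity of $P$. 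The main delicate point in the plan is precisely that Step 11 need not return $P_{u,x}$; the combination of the lopsided universal family guarantee with Corollary \ref{corollary:path-u-to-x-disjoint} is exactly what is engineered to keep $P_{x,v}$ and $P_{v,t}$ available as witnesses for any valid $P_2$ the subroutine might hand back.
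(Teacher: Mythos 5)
Your proof is correct and follows essentially the same two-direction structure as the paper's argument (soundness via Definition \ref{definition:path-conc} and Lemma \ref{lemma:computed-path-length}, completeness via Observation \ref{observation:lopsided}, the layer containment of $P_1$ versus $P_{x,v}\cup P_{v,t}$, and Corollary \ref{corollary:path-u-to-x-disjoint}). You are somewhat more explicit than the paper on two genuinely delicate points — that Step~11 may hand back some $P_2 \neq P_{u,x}$, and that the disjointness guarantee reads as $V(P_2)\cap(V(P_{y,z})\setminus\{x\})=\emptyset$ so that $x$ is allowed as a common vertex — which is a welcome clarification rather than a different route.
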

\begin{proof}

We first show that if the instance is a YES-instance, then the algorithm returns YES. Let $P$ be a shortest solution path. In one of the iterations from Step 9, the algorithm considers  $p,u,v,x$ as defined in Definitions \ref{definition:solution-path} and \ref{definition:Pxyz}. In Step 10, the algorithm computes a shortest $(s,u)$-path $P_1$ via BFS. Such a path exists as $P_{s,u}$ is a candidate. In Step 11, it computes a $(u,x)$-path $P_2$ of length $k$ in the graph $G[F]$ for each $F \in \FF$. From Observation \ref{observation:lopsided}, there exists $F \in \FF$ such that $V(P_{u,x}) \subseteq F$ and $V(P_{y,z}) \cap F = \emptyset$ when $y \neq x$. Also, from the same observation, when \item if $y = x$, there exists $F \in \FF$ such that  $V(P_{u,x}) \subseteq F$ and $(V(P_{y,z}) \setminus \{x\}) \cap F = \emptyset$. Then the algorithm will compute $P_2$ as $P_{u,x}$ is a candidate.
In Step 12, the algorithm computes two vertex disjoint paths, $P_3$ from $x$ to $v$ and $P_4$ from $v$ to $t$ in the graph $G - (V(P_1) \cup V(P_2) \setminus \{x\})$. 
The paths $P_{x,v}$ and $P_{v,t}$ are disjoint from $P_1$ as $P_1$ is contained in layers $L_i$ with $1 \leq i \leq p$ (except $u \in L_p$) and $P_{x,v}$ and $P_{v,t}$ are contained in layers $L_i$ with $i \geq p$. The paths $P_{x,v}$ and $P_{v,t}$ are disjoint from $P_2$ (except the vertex $x$) from Corollary \ref{corollary:path-u-to-x-disjoint} as $P_2$ is disjoint from $P_{y,z}$. Thus, paths $P_3$ and $P_4$ exist as $P_{x,v}$ and $P_{v,t}$ are candidates. 
Hence, the algorithm correctly returns YES in Step 12.

We now show that if the algorithm returns YES in Step 12, then the instance is a YES-instance. The algorithm returns YES only when it finds a shortest $(s,u)$-path $P_1$, a path $P_2$ in graph $G[F]$ for some $F \in \FF$ and two vertex disjoint paths $P_3$ and $P_4$, one from $x$ to $v$ and other from $v$ to $t$ in the graph $G - (V(P_1) \cup V(P_2) \setminus \{x\})$. Notice that the paths $P_1, P_2, P_3$ and $P_4$ satisfy the conditions in Definition \ref{definition:path-conc}. From Lemma \ref{lemma:computed-path-length}, we can conclude that the concatenation of these paths is a solution path as its length is at least $dist_G(s,t)+k$. Hence, the instance is a YES-instance.
\end{proof}
\subsection{Running Time}
\begin{lemma}\label{lemma:algo1-runtime} Algorithm \ref{algorithm} runs in $46.308^k n^{\OO(1)}$ time.
\end{lemma}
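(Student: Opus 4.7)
The plan is to bound the running time of each step of Algorithm~\ref{algorithm} separately and take the dominant term. Steps 1, 2, and 6 run in polynomial time via a reachability check and BFS. For Steps 3--5, I would apply Theorem~\ref{theorem:exact-detour-algo} to each $\ell \in [k, 2k-1]$; a single call costs $6.745^\ell n^{\OO(1)}$, and since the total is a geometric series dominated by $\ell = 2k-1$, the total cost of this loop is at most $6.745^{2k} n^{\OO(1)} \leq 45.5^k n^{\OO(1)}$.

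The heart of the analysis is Steps 7--15. First, Lemma~\ref{lemma:lopsided-universal-family} gives $|\FF| = \binom{4k+2}{k+1} \cdot 2^{o(k)} \cdot \log n$. Using the standard Stirling-type estimate $\binom{4m}{m} \leq (256/27)^m \cdot \mathrm{poly}(m)$, this is at most $9.482^k \cdot 2^{o(k)} \cdot \log n$, and the construction takes time of the same order up to a factor of $n$. Next, I would observe that the inner loop iterates $O(n^4)$ times (one iteration per choice of $p, u, v, x$), and that each iteration executes a BFS (polynomial), one call to Theorem~\ref{theorem:longest-path-algo} for a $(u,x)$-path of length $k$ costing $4.884^k n^{\OO(1)}$, and one call to the polynomial-time {\sc 2-Disjoint Paths} solver available on $\mathcal{C}$. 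Multiplying, Steps 7--15 take time
\[
|\FF| \cdot n^{\OO(1)} \cdot 4.884^k \leq 9.482^k \cdot 4.884^k \cdot 2^{o(k)} \cdot n^{\OO(1)}.
\]

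Finally, the arithmetic $(256/27) \cdot 4.884 < 46.308$ shows that Steps 7--15 run in $46.308^k n^{\OO(1)}$ time once the $2^{o(k)}$ factor is absorbed into the tiny slack in the base (valid for all sufficiently large $k$, with small $k$ handled by adjusting the hidden polynomial factor). Combined with the $45.5^k n^{\OO(1)}$ bound for Steps 3--5, the overall running time of Algorithm~\ref{algorithm} is $46.308^k n^{\OO(1)}$, as claimed. The only delicate point is the numerical bookkeeping around the base $256/27 \approx 9.482$ and checking that its product with $4.884$ stays strictly below $46.308$; the remainder is a routine composition of the running times stated in Theorems~\ref{theorem:longest-path-algo} and~\ref{theorem:exact-detour-algo} and Lemma~\ref{lemma:lopsided-universal-family}.
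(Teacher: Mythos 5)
Your proof is correct and follows essentially the same structure as the paper's: bound Steps 3--5 by $6.745^{2k}n^{\OO(1)}\leq 45.5^k n^{\OO(1)}$, bound $|\FF|$ by $(256/27)^k\cdot 2^{o(k)}$ via Stirling, multiply by the $4.884^k n^{\OO(1)}$ cost of the length-$k$ path search and the polynomial-time \textsc{2-Disjoint Paths} call, and verify $(256/27)\cdot 4.884 < 46.308$. The only cosmetic caution is that $9.482 \cdot 4.884 \approx 46.310 > 46.308$, so the final arithmetic must indeed be carried out with the exact value $256/27 \approx 9.4815$ (as you do) rather than the rounded $9.482$.
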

\begin{proof}
Step 1 of the algorithm can be done by a BFS, which takes $\OO(n+m)$ time. In Steps 3-5, we use Theorem \ref{theorem:exact-detour-algo}, which takes $6.745^{2k-1}n^{\OO(1)} \leq 45.495^k n^{\OO(1)}$  time. The BFS in Step 6 takes  $\OO(n+m)$ time. Afterwards, Step 7 takes ${4k+2 \choose k+1} \cdot 2^{o(4k+2)} \cdot n \log n$ time as per Lemma \ref{lemma:lopsided-universal-family}, which is ${4k \choose k}  2^{o(k)}n^{2}$ time. We have
\begin{eqnarray*}
{4k \choose k} &=& \frac{4k!}{k!(4k-k)!}\\
& \leq & \frac{e^{1/12 (4k)} \sqrt{2\pi (4k)} ((4k)/e)^{4k}}{\sqrt{2\pi k}(k/e)^k\sqrt{2\pi(3k)}(3k)/e)^{3k}} \\
&= & \frac{e^{1/12 (4k)}\sqrt{4k}}{\sqrt{2\pi k(3k)}}\left(\frac{(4k)/e}{k/e}\right)^k\left(\frac{(4k)/e}{(3k)/e}\right)^{3k}\\
& \leq & \frac{e^{1/12 (4k)}\sqrt{4k}}{\sqrt{2\pi k(3k)}} 4^k \left(\frac{4}{3}\right)^{3k}\\
& \leq & \left(\frac{256}{27}\right)^{k} 
\end{eqnarray*}
using the Stirling's approximation inequality $n! \leq e^{1/12n} \sqrt{2\pi n} (n/e)^n$
where $e$ is Euler's constant. Thus the time taken by Step 7 is $(256/27)^k 2^{o(k)} n^{2}$.
The size of the $(n,k+1,3k+1)$-lopsided universal family $\FF$ is bounded by ${4k+2 \choose k+1} \cdot 2^{o(4k+2)} \cdot \log n$, which is $(256/27)^k 2^{o(k)} n$. The number of iterations in Steps 8-15 is thus $(256/27)^k 2^{o(k)} n \cdot rn^3$. The shortest $(s,u)$-path in Step 10 can be computed in linear time. The path $P_2$ of size $k$ is computed via the algorithm in Theorem \ref{theorem:longest-path-algo}, which takes $4.884^{k}n^{\OO(1)}$ time. The two disjoint paths in Step 12 are computed by the {\sc $2$-Disjoint Paths} algorithm on class $\mathcal{C}$, which we assume to run in polynomial time. Overall, the time spent in Steps 8-15 is $(256/27)^k \cdot  4.884^{k} 2^{o(k)} n^{\OO(1)}) \leq 46.308^k n^{\OO(1)}$. So, the total running time is $46.308^k n^{\OO(1)}$. 
\end{proof}

We thus have the following theorem, whose proof follows from Observation \ref{observation:Steps1-5},  Lemma \ref{lemma:Steps6-15} and Lemma \ref{lemma:algo1-runtime}.

\begin{theorem}
\label{theorem:directed-2dp}
Let $\mathcal{C}$ be a class of graphs where  {\sc $2$-Disjoint Paths} can be solved in polynomial time. Then, {\ld} on $\mathcal{C}$ can be solved in $46.308^k n^{\OO(1)}$ time.
\end{theorem}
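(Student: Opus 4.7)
The plan is to present Theorem \ref{theorem:directed-2dp} as an immediate consequence of the correctness and running time analyses already developed for Algorithm \ref{algorithm}. Specifically, I would invoke Observation \ref{observation:Steps1-5} for the correctness of the trivial preprocessing and the short-detour branches handled via Theorem \ref{theorem:exact-detour-algo}, and Lemma \ref{lemma:Steps6-15} for the correctness of the main body of the algorithm. The running time bound follows from Lemma \ref{lemma:algo1-runtime}. Hence, the proof would be only a couple of lines, stitching together these three ingredients.

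To make the dependence on the {\sc $2$-Disjoint Paths} assumption explicit, I would emphasize that the only step where the class $\mathcal{C}$ enters is Step 12 of Algorithm \ref{algorithm}, where a {\sc $2$-Disjoint Paths} instance is solved in the graph $G - (V(P_1) \cup V(P_2) \setminus \{x\})$. Since this graph is an induced subgraph of an input graph $G \in \mathcal{C}$, closure of $\mathcal{C}$ under induced subgraphs (a standard implicit assumption) is enough to guarantee that this subproblem is solvable in polynomial time. All other ingredients — BFS, Theorem \ref{theorem:exact-detour-algo}, Theorem \ref{theorem:longest-path-algo}, and the construction from Lemma \ref{lemma:lopsided-universal-family} — work on arbitrary directed graphs and contribute only $f(k)\cdot n^{\OO(1)}$ time.

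Since the real work has already been done upstream, there is essentially no remaining obstacle in the proof of the theorem itself; the only thing to check is that the exponential base in Lemma \ref{lemma:algo1-runtime}, namely $46.308^k$, dominates all the other exponential contributions ($6.745^{2k-1}$ from Theorem \ref{theorem:exact-detour-algo}, $(256/27)^k \cdot 2^{o(k)}$ from the lopsided family, and $4.884^k$ from Theorem \ref{theorem:longest-path-algo}), which was already verified in the proof of Lemma \ref{lemma:algo1-runtime}. Consequently, the proof of Theorem \ref{theorem:directed-2dp} reduces to writing: ``Correctness of Algorithm \ref{algorithm} follows from Observation \ref{observation:Steps1-5} and Lemma \ref{lemma:Steps6-15}, and its running time is bounded by Lemma \ref{lemma:algo1-runtime}.'' The genuinely difficult parts of the argument are the structural lemmas (Lemma \ref{lemma:Puv-length}, Lemma \ref{lemma:disjoint-from-k-layers}, Lemma \ref{lemma:path-xy-disjoint}) that justify why just two disjoint paths suffice where Fomin et al.~\cite{fomin2022detours} needed three; but these are already established before the theorem statement, so the theorem itself is simply a summary.
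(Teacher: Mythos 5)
Your proposal matches the paper exactly: the paper proves Theorem~\ref{theorem:directed-2dp} by simply citing Observation~\ref{observation:Steps1-5}, Lemma~\ref{lemma:Steps6-15}, and Lemma~\ref{lemma:algo1-runtime}, just as you do. Your extra remark about $\mathcal{C}$ needing to be closed under induced subgraphs (so that Step~12 remains solvable on $G - (V(P_1)\cup V(P_2)\setminus\{x\})$) is a reasonable clarification of an assumption the paper leaves implicit, but it does not change the argument.
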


\bibliography{main-file.bib}

\end{document}